\theoremstyle{plain}
\newtheorem{theorem}{Theorem}
\newtheorem{lemma}[theorem]{Lemma}
\theoremstyle{definition}
\newtheorem{definition}[theorem]{Definition}
\newtheorem{example}[theorem]{Example}
\title{New Multiple Insertion-Deletion Correcting Codes for Non-Binary Alphabets} 
\author{Tuan A. Le and Hieu D. Nguyen
\thanks{T. Le and H. Nguyen are with the Department of Mathematics, Rowan University, Glassboro, NJ 08012 USA (let5@students.rowan.edu, nguyen@rowan.edu).}
}
\begin{document}

   % \subjclass[2010]{Primary 11}
    %\thanks{}
%    \keywords{Levenshtein code, Helberg code, error-correcting code, insertion, deletion, substitution}

\maketitle

\begin{abstract}

We generalize Helberg's number-theoretic construction of multiple insertion-deletion correcting binary codes to non-binary  alphabets and describe a linear decoding algorithm for correcting multiple deletions.

\end{abstract}

\section{Introduction}

Helberg codes \cite{H} are binary codes capable of correcting multiple insertion-deletion errors.  These number-theoretic codes generalize  Levenshtein codes, first constructed by Varshamov and Tenengo'lts \cite{VT} to correct a single asymmetrical error and later proved by Levenshtein \cite{L} to be capable of also correcting a single insertion or deletion error.  Levenshtein's proof included an elegant linear decoding algorithm to correct a single deletion.  Levenshtein codes are asymptotically optimal; however, Helberg codes correcting more than one insertion or deletion have a low rate \cite{PAFC}.

Other special binary codes capable of correcting insertions and deletions include run-length limited codes by Palun\u{c}i\'{c}, Abdel-Ghaffar, Ferreira, and W. A. Clarke \cite{PAFC}, repetition codes by Landjev and Haralambiev \cite{LH}, and repetition error-correcting codes by Dolecek and Anantharam \cite{DA}.  There are of course codes that can correct insertion-deletion errors with high probability over binary symmetric channels such as concatenated codes by Schulman and Zuckerman \cite{SZ} and watermark codes by Davey and MacKay \cite{DM}.  These codes differ from Helberg and other aforementioned codes, which guarantee correction up to a fixed maximum of insertions and/or deletions (or indels for short).  We refer the reader to \cite{APFC} and \cite{PAFC} for an overview of insertion-deletion correcting codes and their applications.

A non-binary generalization of the Levenshtein code is the Tenengol'ts code \cite{T}, which uses a modular relation to determine the value of the inserted or deleted non-binary symbol and an associated Levenshtein code to determine the position of that symbol.  Tenengolts also gave a systematic form of his code that appends the three-bit string 011 to each codeword to serve as check bits for detecting either an insertion or deletion and as a separator between codewords.  A generalization of the Tenengol'ts code to one capable of correcting multiple indels was constructed by Paluncic, T. G. Swart, J. H. Weber, H. C. Ferreira, and W. A. Clarke \cite{PSWFC}.  As with the Tenengol'ts code, their code uses a set of modular relations to determine the values of the deleted symbols and an associated binary multiple insertion-deletion correcting code to determine the positions of the deleted symbols.  However, this information does not uniquely specify which values should be inserted at these positions; thus, their construction involves a purging process that requires removing unwanted codewords that yield the same deleted codeword.  An upper bound was derived for the number of such codewords that can exist, but no efficient algorithm was given to purge these unwanted codewords.  A lower bound for the cardinality of these codes was established, proving that they are asymptotically optimal, but assumes a conjecture regarding the cardinality of the associated binary code.

In this paper we extend Helberg's construction of his codes \cite{H, HF} to non-binary alphabets.  Moreover, we present a linear decoding algorithm to correct codewords that suffer only deletions.  Our proof that these $q$-ary codes are capable of correcting multiple insertion-deletion errors follows the one given by Abdel-Ghaffar, Palun\u{c}i\'{c}, Ferreira, and Clarke \cite{APFC} for Helberg codes, which we adapt for non-binary alphabets.  The proof relies on an argument by contradiction: suppose two codewords with the same residue produce the same deleted codeword.  Then the difference in their moments must be strictly between 0 and the modulus, which gives a contradiction since the two codewords are congruent.

To precisely describe our results, let $A=\{0,1,...,q-1\}$ be a $q$-ary alphabet and $\mathbf{x}=(x_1,...,x_n)\in A^n$ be a codeword of length $n$.  We shall refer to $x_i$ as the $i$-th symbol of $\mathbf{x}$.  Fix $d$ to be a positive integer and set $p=q-1$.  Generalizing \cite{HF}, we define the sequence of weights $W(q,d)=\{w_1(q,d),w_2(q,d),...\}$ as follows.  First, initialize $w_i(q,d)=0$ if $i\leq 0$.  Then for $i\leq 1$, define $w_i(q,d)$ recursively by
\[
w_i(q,d)=1+p\sum_{j=1}^d w_{i-j}(q,d).
\]
When it is clear, we shall write $w_i$ for short instead of $w_i(q,d)$.  Next, we define the truncated codeword $(\mathbf{x})_k=(x_1,\ldots,x_k)$ to be one consisting of the first $k$ symbols of $\mathbf{x}$ and its moment by $M_k(\mathbf{x}):=M((\mathbf{x})_k)$.
We shall also write $M(x_i)=w_ix_i$ to refer to the moment of the symbol $x_i$.  

Our new $q$-ary codes capable of correcting multiple insertion-deletion errors are defined as follows.

\begin{definition} \label{de:generalized-helberg}
Let $m$ and $r$ be fixed integers satisfying $m\geq w_{n+1}$ and $0\leq r < m$.  We define the code $C_n(q,d,m,r)$ to be the set of codewords of length $n$ whose moments have residue $r$ modulo $m$, i.e.,
\[
C_n:=C_n(q,d,m,r)=\{\mathbf{x}\in A^n: M(\mathbf{x}) \equiv r \ \mathrm{mod} \ m \}.
\] 
\end{definition}
\noindent To simplify the notation, we shall sometimes write $C_n$ instead of $C_n(q,d,m,r)$.  In the case of a binary alphabet where $q=2$, the codes $C_n(2,d,m,r)$ are referred to Helberg codes \cite{HF}.  

Given two codewords $\mathbf{x}$ and $\mathbf{y}$ of length $n$, we shall say that $\mathbf{x}$ and $\mathbf{y}$ are {\em congruent} and write $\mathbf{x}\cong \mathbf{y}$ to denote $M(\mathbf{x}) \equiv M(\mathbf{y}) \ \mathrm{mod} \ m$.  In that case, $\mathbf{x},\mathbf{y}\in C_n(q,d,m,r)$ for some residue $r$ where
\[
r\equiv M(\mathbf{x}) \equiv M(\mathbf{y}) \ \mathrm{mod} \ m.
\]
Moreover, if we define $\Delta(\mathbf{x},\mathbf{y})=M(\mathbf{x}) - M(\mathbf{y})$, then $\mathbf{x}\cong \mathbf{y}$ is equivalent to $\Delta(\mathbf{x},\mathbf{y}) \equiv 0 \ \mathrm{mod} \ m$.

Define $S(n)=\{1,...,n\}$.  Let $D$ be a non-empty subset of $S(n)$ with $|D|\leq d$.  Set $n'=n-|D|$ and define $S'=S(n)-D=\{i_1,...,i_{n'}\}$ with $i_1<i_2<...<i_{n'}$.  Moreover, define $\mathbf{x}^{(D)}=(x_{i_1},...,x_{i_{n'}})$ to be the codeword obtained by deleting the elements of $\mathbf{x}$ indexed by $D$.  We shall refer to $\mathbf{x}^{(D)}$ as a {\em deleted codeword} of $\mathbf{x}$.   We also define the {\em index} of $\mathbf{x}^{(D)}$ to be difference in moments between the original codeword and its deleted codeword:
\[
I:=I(\mathbf{x}^{(D)})=M(\mathbf{x})-M(\mathbf{x}^{(D)}).
\]

We prove in Section \ref{sec:2} that the code $C_n(q,d,m,r)$ is capable of correcting up to $d$ deletion errors.  In particular, let $\mathbf{x}, \mathbf{y} \in C_n(q,d,m,r)$ be two distinct codewords and suppose there exists subsets $D$ and $E$ of $S(n)$ such that $|D|=|E|\leq d$ and $x^{(D)}=y^{(E)}$.  We show that $0< \Delta(\mathbf{x},\mathbf{y})<m$, which is a contradiction since $\mathbf{x}\cong \mathbf{y}$.  Thus, no such subsets exist.  Hence, $C_n(q,d,m,r)$ is a $d$-deletion correcting code.  By a result of Levenshtein \cite{L}, $C_n(q,d,m,r)$ is also capable of correcting a total of $d$ indels.

In Section \ref{sec:3}, we present a linear search algorithm to decode codewords in $C_n$ that suffer only deletions.  Suppose a codeword $\mathbf{x}$ is transmitted, but is corrupted so that the received codeword, denoted by $\mathbf{x}'$, consists of deletion errors.  The goal of our algorithm to find the correct positions to re-insert into $\mathbf{x}'$ the symbols that were deleted so that the index $I$ reduces to zero.  In particular, we start with the assumption that our deleted symbols should be inserted at the right end of $\mathbf{x}'$.  If these symbols are not in their correct positions, then we shift them to the left as far as possible and update the index $I$ by subtracting the change in the $weight$ of each moving symbol from the current value of $I$.  The algorithm terminates when $I=0$.  For the correction of one deletion error, the algorithm essentially performs an exhaustive trial-by-error search.  However, for two or more deletion errors, the algorithm is recursive in the following sense:  assuming that $d$-deletion errors have occurred, the algorithm corrects the rightmost deleted bit, after which the decoding reduces to the algorithm for correcting $(d-1)$-deletion errors.  Moreover, for $d\geq 2$, the algorithm is efficient because its complexity is linear, namely $O(n)$, where $n$ is the length of the transmitted codeword.

Lastly, in the Appendix B, we present values for the size of the largest code $C_n(q,d,m,r)$ for certain values of $q$, $d$, and $n$.  These values were found through exhaustive computer search.

\section{Generalized Helberg Codes} 
\label{sec:2}

Our proof that $C_n(q,d,m,r)$ is a $d$-deletion error-correcting code follows the proof given in \cite{APFC}, where we adapt their arguments for $q$-ary alphabets.
We shall need the following lemma, which allows us to replace the rightmost non-zero bit with the value 0 in any two codewords that are congruent and have the same deleted codeword.  This assumes that the rightmost nonzero bit is the same for both codewords.

\begin{lemma} \label{le:replace-bit}
Let $\mathbf{x}$ and $\mathbf{y}$ be two codewords of length $n$ with the following two properties:
\begin{enumerate}[(1)]
\item $\mathbf{x} \cong \mathbf{y}$.

\item $\mathbf{x}^{(D)}=\mathbf{y}^{(E)}$ for some subsets $D$ and $E$ of $\{1,...,n\}$ with $|D|=|E|\leq d$.
\end{enumerate}

\noindent Suppose there exists a positive integer $L$ such that $x_{L}=y_{L}>0$ and $x_i=y_i=0$ for all $i>L$.  
Then  there exist codewords $\tilde{\mathbf{x}}$ and $\tilde{\mathbf{y}}$ where $\tilde{x}_i=x_i$, $\tilde{y}_i=y_i$ for all $i\neq L$ and $\tilde{x}_{L}=\tilde{y}_{L}=0$ such that $\tilde{\mathbf{x}}$ and $\tilde{\mathbf{y}}$ have the same two properties as $\mathbf{x}$ and $\mathbf{y}$, namely
\begin{enumerate}[(i)]
\item $\tilde{\mathbf{x}} \cong \tilde{\mathbf{y}}$.

\item $\tilde{\mathbf{x}}^{(\tilde{D})}=\tilde{\mathbf{y}}^{(\mathbf{\tilde{E}})}$ for some sets $\tilde{D}$ and $\tilde{E}$ having the same size as $D$ and $E$. 

\end{enumerate}
\end{lemma}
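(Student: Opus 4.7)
The plan is to verify property (i) uniformly and then handle property (ii) by a four-case analysis on whether $L\in D$ and $L\in E$. Property (i) is essentially automatic: since $\tilde{\mathbf{x}}$ and $\tilde{\mathbf{y}}$ differ from $\mathbf{x}$ and $\mathbf{y}$ only at the single position $L$, and since $x_L=y_L$, we have
\[
M(\tilde{\mathbf{x}})-M(\tilde{\mathbf{y}}) = \bigl(M(\mathbf{x})-w_L x_L\bigr)-\bigl(M(\mathbf{y})-w_L y_L\bigr) = M(\mathbf{x})-M(\mathbf{y}) \equiv 0 \pmod{m},
\]
using $\mathbf{x}\cong \mathbf{y}$. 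So I can focus entirely on constructing $\tilde{D}$ and $\tilde{E}$ for (ii).

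Next I would dispose of the two easy cases by choosing $\tilde{D}=D$ and $\tilde{E}=E$. In the case $L\in D\cap E$, the modification at position $L$ is deleted on both sides, so $\tilde{\mathbf{x}}^{(D)}=\mathbf{x}^{(D)}=\mathbf{y}^{(E)}=\tilde{\mathbf{y}}^{(E)}$ at once. In the case $L\notin D\cup E$, the symbols $x_L$ and $y_L$ survive as the rightmost non-zero entries of $\mathbf{x}^{(D)}$ and $\mathbf{y}^{(E)}$ respectively, because nothing to their right is non-zero (using $x_i=y_i=0$ for $i>L$); the hypothesis $\mathbf{x}^{(D)}=\mathbf{y}^{(E)}$ then forces these two entries to sit at the same index in the deleted codewords, and zeroing them on both sides in parallel preserves the equality.

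The substantive content lies in the two ``mixed'' cases, which are symmetric; I would spell out $L\in D$, $L\notin E$ in detail and infer the other by interchanging the roles of $(\mathbf{x},D)$ and $(\mathbf{y},E)$. Let $p$ be the index at which $y_L$ appears in $\mathbf{y}^{(E)}$, and let $d_p$ be the $p$-th kept position of $\mathbf{x}$. Matching entries at index $p$ in $\mathbf{x}^{(D)}=\mathbf{y}^{(E)}$ forces $x_{d_p}=y_L>0$; since $x_i=0$ for $i>L$ and $L\in D$, one has $d_p<L$, and moreover every kept position of $\mathbf{x}$ strictly between $d_p$ and $L$ must have value $0$ (these account for the run of trailing zeros of $\mathbf{x}^{(D)}$ past index $p$). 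I would then take $\tilde{E}=E$ and $\tilde{D}=(D\setminus\{L\})\cup\{d_p\}$, which preserves $|\tilde{D}|=|D|$ since $L\in D$ and $d_p\notin D$; the idea is to move the deletion off position $L$, now harmless because $\tilde{x}_L=0$, and onto the offending non-zero at $d_p$. The main obstacle I expect is the bookkeeping required to confirm $\tilde{\mathbf{x}}^{(\tilde{D})}=\tilde{\mathbf{y}}^{(\tilde{E})}$: on the $\mathbf{y}$ side $\tilde{\mathbf{y}}^{(E)}$ coincides with $\mathbf{y}^{(E)}$ except at index $p$, where its value drops from $y_L$ to $\tilde{y}_L=0$; on the $\mathbf{x}$ side, dropping $d_p$ strips the value $x_{d_p}=y_L$ at index $p$ while the newly kept position $L$ contributes $\tilde{x}_L=0$ at whatever slot $L$ occupies in the sorted kept list, and because every intervening kept position of $\mathbf{x}$ already carries $0$, the rearrangement merely shuffles zeros and amounts to zeroing out index $p$. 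Both deleted codewords then coincide with $\mathbf{y}^{(E)}$ with its index-$p$ entry replaced by $0$, completing the verification.
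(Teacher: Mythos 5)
Your proof is correct, and it coincides with the paper's in its overall architecture: property (i) via the identity $M(\tilde{\mathbf{x}})-M(\tilde{\mathbf{y}})=M(\mathbf{x})-M(\mathbf{y})$, and property (ii) via the same four cases on whether $L$ lies in $D$ and/or $E$, with the cases $L\in D\cap E$ and $L\notin D\cup E$ handled identically. Where you genuinely diverge is in the mixed case $L\in D\setminus E$: the paper replaces $D$ by an explicit interval-shaped set $D'=\{K,K+1,\dots,L-1,L+1,\dots,L+z'\}$ chosen so that $\mathbf{x}^{(D')}=\mathbf{x}^{(D)}$ and $L\notin D'$, thereby reducing to the case $L\notin D\cup E$; you instead perform a single swap $\tilde{D}=(D\setminus\{L\})\cup\{d_p\}$, trading the deletion at $L$ for one at the kept position $d_p$ carrying the matching nonzero value, and verify the resulting equality of deleted codewords directly by observing that everything kept to the right of $d_p$ is zero. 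Your route buys two things. First, it is more economical: you do not need to funnel the mixed case back through the easy case, and the verification reduces to the one observation that the tail past index $p$ consists only of zeros. Second, it is more robust: the paper's $D'$ as written discards any deletions of $D$ at positions to the left of $K$ (its name for your $d_p$), so the claims $|D'|=|D|$ and $\mathbf{x}^{(D')}=\mathbf{x}^{(D)}$ silently assume $D\subseteq\{K+1,\dots,n\}$; this is repairable by retaining $D\cap\{1,\dots,K-1\}$, but your single-swap construction needs no such repair and works verbatim in general. The paper's version, in exchange, makes the reduction to the already-settled case $L\notin D\cup E$ completely explicit, which is a minor expository convenience.
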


\begin{proof} Define $\tilde{\mathbf{x}}$ and $\tilde{\mathbf{y}}$ according to the lemma.  Since $x_i-y_i=\tilde{x}_i-\tilde{y}_i$ for all $i=1,...,n$, it follows that $M(\mathbf{x}) - M(\mathbf{y})=M(\tilde{\mathbf{x}})-M(\tilde{\mathbf{y}})$.  But $\mathbf{x} \cong \mathbf{y}$; hence, $\tilde{\mathbf{x}}\cong \tilde{\mathbf{y}}$.  This proves (i).  To prove (ii), we consider four cases:

\vskip 6pt
\noindent Case I: Assume $L \in D\cap E$.  In this case, the nonzero bits $x_{L}$ and $y_{L}$are deleted from $\mathbf{x}$ and $\mathbf{y}$, respectively, to obtain $\mathbf{x}^{(D)}$ and $\mathbf{y}^{(E)}$.  Define $\tilde{D}=D$ and $\tilde{E}=E$.  Since $\mathbf{x}^{(D)}=\mathbf{y}^{(E)}$, it follows that $\tilde{\mathbf{x}}^{(\tilde{D})}=\tilde{\mathbf{y}}^{(\tilde{E})}$ since the zero bits $\tilde{x}_{L}$ and $\tilde{y}_{L}$ are deleted from $\tilde{\mathbf{x}}$ and $\tilde{\mathbf{y}}$, respectively, as well.

\vskip 6pt
\noindent Case II: Assume $L\not\in D\cup E$.  Since $\mathbf{x}^{(D)}=\mathbf{y}^{(E)}$, it follows that $x_{L}$ and $y_{L}$ appear in $\mathbf{x}^{(D)}$ and $\mathbf{y}^{(E)}$ as the rightmost nonzero bit, respectively.  But then replacing $x_{L}$ and $y_{L}$ by $\tilde{x}_{L}$ and $\tilde{y}_{L}$, respectively, yields $\tilde{\mathbf{x}}^{(D)}=\tilde{\mathbf{y}}^{(E)}$.  Thus, it suffices to again define $\tilde{D}=D$ and $\tilde{E}=E$.  

\vskip 6pt
\noindent Case III: Assume $L \in D-E$.  In this case, the bit $x_{L}$ is deleted from $\mathbf{x}$ to obtain $\mathbf{x}^{(D)}$, but the bit $y_{L}$ is not deleted from $\mathbf{y}$ and therefore appears in $\mathbf{y}^{(E)}$.  Let $z$ denote the number of bits to the right of $y_{L}$ in $\mathbf{y}^{(E)}$, which must all be 0 since $y_i=0$ for all $i>L$.  Then the number of bits to the right of $y_{L}$ that are deleted from $\mathbf{y}$ to obtain $\tilde{\mathbf{y}}^{(E)}$ equals $z'=n-L-z$.  Let $K$ denote the position of the rightmost nonzero bit $x_K$ of $\mathbf{x}^{(D)}$.  Since $\mathbf{x}^{(D)}=\mathbf{y}^{(E)}$, it follows that $x_K=y_{L}=x_{L}$ and that the number of zeros to the right of $x_K$ in $\mathbf{x}^{(D)}$ also equals $z$.  Therefore, the number of bits to the right of $x_K$ that are deleted from $\mathbf{x}$ to obtain $\mathbf{x}^{(D)}$ equals $n-K-z$. We now define $D'=\{K,K+1,...,L-1,L+1,...,L+z'\}$ where we exclude $L$.  It follows that $\mathbf{x}^{(D')}=\mathbf{x}^{(D)}$ with $|D'|=|D|$.  Since $L \not\in D'\cup E$, this reduces to Case II where $D$ is replaced by $D'$.

\vskip 6pt
\noindent Case IV: Assume $L\in E-D$.  The argument in this case is the same as Case III with the roles of $D$ and $E$ reversed.
\end{proof}

\begin{theorem}\label{th:inequality}
Let $\mathbf{x}$ and $\mathbf{y}$ be two codewords of length $n$ that satisfy properties (1) and (2) in Lemma \ref{le:replace-bit}.
Then
\[
0<|\Delta(\mathbf{x},\mathbf{y})|<m.
\]
\end{theorem}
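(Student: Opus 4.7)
The plan is to split the desired inequality $0<|\Delta(\mathbf{x},\mathbf{y})|<m$ into an upper bound $|\Delta|<m$ and a lower bound $|\Delta|>0$, and to tackle them by different means. For the upper bound, I would use the hypothesis $\mathbf{x}^{(D)}=\mathbf{y}^{(E)}$ to rewrite
\[
\Delta(\mathbf{x},\mathbf{y}) = [M(\mathbf{x})-M(\mathbf{x}^{(D)})] - [M(\mathbf{y})-M(\mathbf{y}^{(E)})] = I(\mathbf{x}^{(D)}) - I(\mathbf{y}^{(E)}),
\]
and then show $0\leq I(\mathbf{x}^{(D)}),\,I(\mathbf{y}^{(E)})<m$. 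Nonnegativity is immediate since the weights are positive and deleting symbols only subtracts nonnegative contributions from the moment. For the upper bound, a direct computation shows that deleting a single symbol at position $i$ from a $q$-ary codeword of length $n$ changes its moment by $w_i x_i + \sum_{k>i}(w_k-w_{k-1})x_k$, which is at most $p w_n$ since $x_k \leq p$. Iterating this estimate across the $|D|\leq d$ successive deletions gives $I(\mathbf{x}^{(D)}) \leq p(w_n + w_{n-1} + \cdots + w_{n-d+1}) = w_{n+1}-1 < w_{n+1} \leq m$, where the equality is the weight recurrence. Combining both bounds yields $|\Delta(\mathbf{x},\mathbf{y})| < m$.

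For the lower bound, I would first dispose of the trivial case $\mathbf{x}=\mathbf{y}$ (the implicit assumption under which the theorem is invoked when proving that $C_n$ corrects $d$ deletions), then invoke Lemma \ref{le:replace-bit} iteratively to strip off any shared positive rightmost bit. Each application preserves both hypotheses and the value of $\Delta$, and since $\mathbf{x}\neq\mathbf{y}$ the process terminates at a pair $\tilde{\mathbf{x}},\tilde{\mathbf{y}}$ whose rightmost nonzero bits disagree---either by occupying different positions or, at the same position, by taking different values. Letting $L$ be the largest index with $\tilde{x}_L\neq\tilde{y}_L$ and assuming without loss of generality that $\tilde{x}_L>\tilde{y}_L$, we have $\tilde{x}_i=\tilde{y}_i=0$ for all $i>L$, so
\[
\Delta(\tilde{\mathbf{x}},\tilde{\mathbf{y}}) = w_L(\tilde{x}_L - \tilde{y}_L) + \sum_{i<L} w_i(\tilde{x}_i - \tilde{y}_i),
\]
whose leading term is at least $w_L$. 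The remaining task is to show that the correction sum has absolute value strictly less than $w_L$, which then forces $\Delta(\tilde{\mathbf{x}},\tilde{\mathbf{y}})>0$.

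This last step is the main obstacle: the naive bound $p\sum_{i<L}w_i$ exceeds $w_L$ as soon as $L>d$, so the argument must exploit the structural constraint $\tilde{\mathbf{x}}^{(\tilde{D})}=\tilde{\mathbf{y}}^{(\tilde{E})}$, which forces $\tilde{\mathbf{x}}$ and $\tilde{\mathbf{y}}$ to agree except at positions pinned by the symmetric difference of $\tilde{D}$ and $\tilde{E}$. A careful case analysis on the relative positions of $L$ and the deletion sets, combined with telescoping through the recurrence $w_i = 1 + p\sum_{j=1}^d w_{i-j}$---the same approach used in the binary argument of \cite{APFC}---should provide a sharp enough bound on the correction sum to complete the proof.
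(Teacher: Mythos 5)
Your upper bound is correct and in fact takes a cleaner route than the paper's: the paper expands $M(\mathbf{x})-M(\mathbf{y})$ over the deleted and matched positions and then rearranges a double partition of $S(n')$, whereas you use $\mathbf{x}^{(D)}=\mathbf{y}^{(E)}$ to write $\Delta(\mathbf{x},\mathbf{y}) = I(\mathbf{x}^{(D)})-I(\mathbf{y}^{(E)})$ and bound each index separately by $0\le I\le p\sum_{j=1}^{d}w_{n+1-j}=w_{n+1}-1<m$ via the telescoping estimate for a single deletion. That half stands on its own.

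The genuine gap is in the lower bound $\Delta(\mathbf{x},\mathbf{y})\neq 0$, which is the crux of the theorem and occupies essentially all of the paper's proof. You correctly reduce, by iterating Lemma \ref{le:replace-bit}, to a pair with an index $L$ satisfying $\tilde{x}_L>\tilde{y}_L$ and $\tilde{x}_i=\tilde{y}_i=0$ for $i>L$, and you correctly diagnose that the naive bound $p\sum_{i<L}w_i$ on the correction term is useless. But at exactly the point where the real work begins you write that a case analysis ``should provide a sharp enough bound'' --- that estimate is never produced, and it is not routine. The paper does not work with your position-wise correction sum $\sum_{i<L}w_i(\tilde{x}_i-\tilde{y}_i)$ at all; it instead decomposes $\Delta$ as $\sum_{i\in D}w_ix_i-\sum_{j\in E}w_jy_j+\sum_{k}(w_{i_k}-w_{j_k})x_{i_k}$ using the matching $x_{i_k}=y_{j_k}$ forced by $\mathbf{x}^{(D)}=\mathbf{y}^{(E)}$, splits into four cases according to whether $L$ lies in $D\cap E$, $D-E$, $E-D$, or neither, and in each case combines the identity (\ref{eq:decomposition}) with the observation $k\le\min(i_k,j_k)$ to conclude $\Delta\ge w_L-p\sum_{i=L-d}^{L-1}w_i=1$. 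That inequality $k\le\min(i_k,j_k)$ --- which lets one replace the surviving weights $w_{\min(i_k,j_k)}$ by the smallest admissible weights $w_k$ so that all but the top $d$ terms of $\sum_{j=1}^{L-1}pw_j$ cancel against $w_L$ via the recurrence --- is the missing idea; without it, or an equivalent device, the correction term cannot be controlled and the proof is incomplete as written.
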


\begin{proof} We shall first prove that $|\Delta(\mathbf{x},\mathbf{y})|<m$. To begin, we rewrite $\Delta(\mathbf{x},\mathbf{y})$ as follows:
\begin{align*}
\Delta(\mathbf{x},\mathbf{y}) 
& =  M(\mathbf{x}) - M(\mathbf{y}) \\
& = \sum_{i\in D}w_ix_i - \sum_{j\in E} w_j y_j +\sum_{k=1}^{n'}(w_{i_k} - w_{j_k})x_{i_k},
\end{align*}
where $n'=n-|D|$.
This yields the bound
\begin{align*}
\Delta(\mathbf{x},\mathbf{y}) & \leq \sum_{i\in D}w_ix_i +\sum_{k=1}^{n'}(w_{i_k} - w_{j_k})x_{i_k}.
\end{align*}
Next, we partition $S(n')=\{1,2,...,n'\}$ into those elements $k$ where $i_k\leq j_k$ and those where $i_k>j_k$ to obtain
\begin{align*}
\Delta(\mathbf{x},\mathbf{y})  & \leq  \sum_{i\in D}w_ix_i +\sum_{\substack{k\in S(n') \\ i_k \leq j_k }}(w_{i_k} - w_{j_k})x_{i_k} \\
& \ \ \ \ + \sum_{\substack{k\in S(n') \\  i_k > j_k }}(w_{i_k} - w_{j_k})x_{i_k} \\
& \leq  \sum_{i\in D}w_ix_i + \sum_{\substack{k\in S(n') \\  i_k > j_k }}(w_{i_k} - w_{j_k})x_{i_k} \\
& \leq  \sum_{i\in D}p w_i + \sum_{\substack{k\in S(n') \\  i_k > j_k }}p (w_{i_k} - w_{j_k}) \\
& =  \sum_{i\in D}p w_i + \sum_{\substack{k\in S(n') \\  i_k > j_k }}p w_{i_k} - \sum_{\substack{k\in S(n') \\  i_k > j_k }}p w_{j_k}.
\end{align*}
We now add and subtract as follows:
\begin{align*}
\Delta(\mathbf{x},\mathbf{y}) & \leq  \sum_{i\in D}p w_i + \sum_{\substack{k\in S(n') \\  i_k > j_k }}p w_{i_k} +\sum_{\substack{k\in S(n') \\  i_k \leq j_k }}p w_{i_k} \\
& \ \ \ \ -\sum_{\substack{k\in S(n') \\  i_k \leq j_k }}p w_{i_k} - \sum_{\substack{k\in S(n') \\  i_k > j_k }}p w_{j_k} \\
& =  \sum_{i=1}^n p w_i -\sum_{k=1}^{n'} p w_{\min(w_{i_k},w_{j_k})} \\
& \leq  \sum_{i=1}^n p w_i -\sum_{k=1}^{n'} p w_{k} \\
& \leq  \sum_{i=n'+1}^{n} p w_i =  \sum_{j=1}^{n-n'} p w_{n+1-j}   \\
& \leq  p\sum_{j=1}^{d} w_{n+1-j} =w_{n+1}-1 \\
& < m.
\end{align*}
On the other hand, by reversing the roles of $\mathbf{x}$ and $\mathbf{y}$, we obtain $\Delta(\mathbf{y},\mathbf{x}) <m$, which implies $\Delta(\mathbf{x},\mathbf{y})=-\Delta(\mathbf{y},\mathbf{x})>-m$.  Hence,  $|\Delta(\mathbf{x},\mathbf{y})|<m$ as desired.

Next, we prove that $\Delta(\mathbf{x},\mathbf{y})\neq 0$ by considering four different cases.  By Lemma 1 we can assume without loss of generality that there exists an integer $L\in \{1,...,n\}$ such that
$x_L>y_L$ and $x_i=y_i=0$ for all $i>L$.

\vskip 6pt

\noindent Case I: Assume $L\in D\cap E$.  Then $i_k \neq L$ for all $k=1,...,n'$. Decompose
\begin{align*}
\Delta(\mathbf{x},\mathbf{y}) & = \sum_{i\in D}w_ix_i - \sum_{j\in E} w_j y_j +\sum_{k=1}^{n'}(w_{i_k} - w_{j_k})x_{i_k}. \\
\end{align*}
The first two summations on the right-hand side is bounded below by
\begin{flalign*}
& \sum_{i\in D}w_ix_i - \sum_{j\in E} w_j y_j \\
&  = w_Lx_L +\sum_{\substack {i \in D \\ i \leq L-1}} w_i x_i 
 - w_Ly_L -  \sum_{\substack {j \in E \\ j\leq L-1}} w_j y_j \\
&  \geq w_L-  \sum_{\substack {j \in E \\ j\leq L-1}} w_j y_j
\end{flalign*}
The third summation is bounded below by
\begin{flalign*}
&\sum_{k=1}^{n'}(w_{i_k} - w_{j_k})x_{i_k} \\
& =\sum_{\substack{k\in S(n') \\ i_k < j_k }}(w_{i_k} - w_{j_k})x_{i_k} + \sum_{\substack{k\in S(n') \\  i_k \geq j_k }}(w_{i_k} - w_{j_k})x_{i_k} \\
& \geq \sum_{\substack{k\in S(n') \\ i_k < j_k, i_k \leq L-1 }}(w_{i_k} - w_{j_k})x_{i_k},
\end{flalign*}
where we have used the fact that $i_k\neq L$ and $x_{i_k}=0$ for $i_k >L$.
It follows that
\[
\begin{aligned}
& \Delta(\mathbf{x},\mathbf{y}) \\
& \geq w_L -  \sum_{\substack {j \in E \\ j\leq L-1}} w_j y_j  +\sum_{\substack{k\in S(n') \\ i_k < j_k, i_k \leq L-1 }}p(w_{i_k} - w_{j_k})
\end{aligned}
\]
since  $x_{i_k} \leq p$.
Next, we use the decomposition
\begin{align}
\sum_{j=1}^{L-1} p w_j & = \sum_{\substack {j \in E \\ j\leq L-1}} p w_j + \sum_{\substack{k\in S(n') \\ i_k <  j_k, i_k \leq L-1 }}pw_{j_k} \label{eq:decomposition} \\
& \ \ \ \ + \sum_{\substack{k\in S(n') \\ i_k \geq j_k, i_k \leq L-1 }}pw_{j_k}  \notag
\end{align}
to obtain
\begin{align*}
\Delta(\mathbf{x},\mathbf{y}) 
& \geq w_L -  \sum_{j=1}^{L-1} p w_j +\sum_{\substack{k\in S(n') \\ i_k < j_k, i_k \leq L-1 }}pw_{i_k} \\
& \ \ \ \ +\sum_{\substack{k\in S(n') \\ i_k \geq j_k, i_k \leq L-1 }}pw_{j_k}.
\end{align*}
This equivalent to
\begin{align*}
\Delta(\mathbf{x},\mathbf{y}) 
& \geq w_L -  \sum_{j=1}^{L-1} p w_j +\sum_{\substack{k\in S(n') \\ i_k \leq L-1 }}pw_{\mathrm{min}(i_k,j_k)}.
\end{align*}
Since  $k\leq \mathrm{min}(i_k,j_k)$, we have
\begin{align*}
\Delta(\mathbf{x},\mathbf{y}) & \geq w_L -  \sum_{j=1}^{L-1} p w_j +\sum_{k=1}^{\mathrm{min}(n',L-1)}pw_k \\
& = w_L -  \sum_{i=\mathrm{min}(n',L-1)}^{L-1} p w_i \\
& \geq w_L -  \sum_{i=L-d}^{L-1} p w_i \\
& \geq 1,
\end{align*}
where we have used the fact that $L-d \leq \mathrm{min}(n',L-1)$.  Also, recall that $L \leq n=n'+d$ and $d\geq 1$.

\noindent Case II: Assume $L\in D-E$.  Recall that $x_L>y_L$ and $x_i=y_i=0$ for $i>L$.  Since $L\notin E$, it follows that
\begin{align*}
\Delta(\mathbf{x},\mathbf{y})  
& = \sum_{i\in D}w_ix_i - \sum_{j\in E} w_j y_j +\sum_{k=1}^{n'}(w_{i_k} - w_{j_k})x_{i_k} \\
& = w_Lx_L +\sum_{\substack {i \in D \\ i \leq L-1}} w_i x_i  -  \sum_{\substack {j \in E \\ j\leq L-1}} w_j y_j \\
& \ \ \ \ +\sum_{k=1}^{n'}(w_{i_k} - w_{j_k})x_{i_k}.
\end{align*}
Analogously, we partition $S(n')$ into those elements $k$ where $i_k<j_k$ and those where $i_k\geq j_k$ to obtain
\begin{align*}
& \Delta(\mathbf{x},\mathbf{y})  \\
& \geq w_Lx_L -  \sum_{\substack {j \in E \\ j\leq L-1}} w_j y_j +\sum_{\substack{k\in S(n') \\ i_k < j_k }}(w_{i_k} - w_{j_k})x_{i_k} \\
& \ \ \ \ + \sum_{\substack{k\in S(n') \\  i_k \geq j_k }}(w_{i_k} - w_{j_k})x_{i_k} \\
& \geq w_Lx_L -  \sum_{\substack {j \in E \\ j\leq L-1}} w_j y_j +\sum_{\substack{k\in S(n') \\ i_k < j_k }}(w_{i_k} - w_{j_k})x_{i_k} \\
& \geq w_L -  \sum_{\substack {j \in E \\ j\leq L-1}} w_j y_j \\
& \ \ \ \ +\sum_{\substack{k\in S(n') \\ i_k < j_k, i_k \leq L-1 }}(w_{i_k} - w_{j_k})x_{i_k}  
\end{align*}
The rest of the argument now follows the same as that in Case I to establish that $\Delta(\mathbf{x},\mathbf{y})\geq 1$.

\vskip 6pt
\noindent Case III. Assume $L \in E-D$.  The argument in this case is the same as Case II by switching the roles of $D$ and $E$.

\vskip 6pt
\noindent Case IV. Assume $L \not\in D\cup E$.  Then $i_K=L$ for some $i_K \in S'$.  We claim that $j_K\leq i_K -1$.  Since $\mathbf{x}^{(D)}=\mathbf{y}^{(E)}$, it follows that $x_{i_K}=y_{j_K}$.  On the other hand, we have $y_{i_K}<x_{i_K}$ and $y_i=0$ for all $i\geq L=i_K$.  Thus, $j_K\leq i_K-1$. 

We now proceed similarly as in previous cases:
\[
\begin{aligned}
& \Delta(\mathbf{x},\mathbf{y})  \\
& = \sum_{\substack {i \in D \\ i \leq L-1}} w_i x_i  -  \sum_{\substack {j \in E \\ j\leq L-1}} w_j y_j +\sum_{k=1}^{n'}(w_{i_k} - w_{j_k})x_{i_k} \\
& \geq -  \sum_{\substack {j \in E \\ j\leq L-1}} w_j y_j +\sum_{\substack{k\in S(n') \\ i_k < j_k }}(w_{i_k} - w_{j_k})x_{i_k} \\
& \ \ \ \ + \sum_{\substack{k\in S(n') \\  i_k \geq j_k }}(w_{i_k} - w_{j_k})x_{i_k} \\
&  \geq -  \sum_{\substack {j \in E \\ j\leq L-1}} w_j y_j + \sum_{\substack{k\in S(n') \\ i_k < j_k, i_k \leq L-1 }}(w_{i_k} - w_{j_k})x_{i_k} \\
& \ \ \ \ + (w_{i_K}-w_{j_K})x_{i_K}.
\end{aligned}
\]
Next, since $x_i \leq p$ for all $i\in \mathbb{N}$, we have
\[
\begin{aligned}
& \Delta(\mathbf{x},\mathbf{y}) \\
& =  w_L - pw_{j_K} -  \sum_{\substack {j \in E \\ j\leq L-1}} p w_j \\
& \ \ \ \ +\sum_{\substack{k\in S(n') \\ i_k < j_k, i_k \leq L-1 }}pw_{i_k} - \sum_{\substack{k\in S(n') \\ i_k < j_k, i_k \leq L-1 }}pw_{j_k}. \\
\end{aligned}
\]
Again, using (\ref{eq:decomposition}), we obtain the lower bound
\begin{align*}
\Delta(\mathbf{x},\mathbf{y}) & \geq w_L -  \sum_{j=1}^{L-1} p w_j +\sum_{\substack{k\in S(n') \\ i_k < j_k, i_k \leq L-1 }}pw_{i_k} \\
& \ \ \ \ +\sum_{\substack{k\in S(n') \\ i_k \geq j_k, i_k \leq L-1 }}pw_{j_k}. \\
\end{align*}
The rest of the proof  is the same as that in Case I.  Therefore, $\Delta(\mathbf{x},\mathbf{y})\geq 1$.  Hence, $0<|\Delta(\mathbf{x},\mathbf{y})|<m$ as desired.
\end{proof}

\begin{theorem}
The code $C_n(q,d,m,r)$ is a $d$-insertion-deletion correcting code.
\end{theorem}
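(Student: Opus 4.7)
The plan is to deduce this theorem as an almost immediate corollary of Theorem~\ref{th:inequality}, combined with the standard Levenshtein equivalence between deletion correction and insertion-deletion correction.

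First I would reduce to the deletion-only case. By Levenshtein \cite{L}, a code of length $n$ over $A$ corrects $d$ insertion-deletion errors if and only if it corrects $d$ deletions. So it suffices to show that $C_n(q,d,m,r)$ is a $d$-deletion correcting code, i.e., that no two distinct codewords in $C_n$ share a common deleted codeword obtainable by at most $d$ deletions from each.

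Next I would argue by contradiction. Suppose there exist distinct codewords $\mathbf{x}, \mathbf{y} \in C_n(q,d,m,r)$ and subsets $D, E \subseteq S(n)$ with $|D| \leq d$, $|E| \leq d$, and $\mathbf{x}^{(D)} = \mathbf{y}^{(E)}$. Since the two deleted codewords are equal, they have the same length, forcing $n - |D| = n - |E|$ and hence $|D| = |E|$. This gives property~(2) of Lemma~\ref{le:replace-bit}. Property~(1), namely $\mathbf{x} \cong \mathbf{y}$, is automatic because both codewords lie in the same residue class modulo $m$, so $\Delta(\mathbf{x},\mathbf{y}) \equiv 0 \pmod{m}$. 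Theorem~\ref{th:inequality} then yields
\[
0 < |\Delta(\mathbf{x},\mathbf{y})| < m,
\]
but no nonzero integer of absolute value strictly less than $m$ can be divisible by $m$. This contradiction shows that no such pair $\mathbf{x}, \mathbf{y}$ exists, so $C_n$ is $d$-deletion correcting, and by the preceding reduction it is $d$-insertion-deletion correcting.

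There is no real obstacle here since all the heavy lifting was done in Theorem~\ref{th:inequality}; the only subtle step is the length-matching observation that $\mathbf{x}^{(D)} = \mathbf{y}^{(E)}$ forces $|D| = |E|$, which is what brings the contradiction hypothesis exactly into the form required by Theorem~\ref{th:inequality}.
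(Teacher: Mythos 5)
Your proposal is correct and follows essentially the same route as the paper: argue by contradiction, invoke Theorem~\ref{th:inequality} to get $0<|\Delta(\mathbf{x},\mathbf{y})|<m$, conclude $\mathbf{x}\not\cong\mathbf{y}$, and then appeal to Levenshtein's equivalence between deletion correction and insertion-deletion correction. Your extra remark that $\mathbf{x}^{(D)}=\mathbf{y}^{(E)}$ forces $|D|=|E|$ is a small but worthwhile clarification that the paper leaves implicit.
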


\begin{proof}
Suppose on the contrary that $C_n(q,d,m,r)$ is not capable of correcting up to $d$ deletions.  Then there exist two codewords $\mathbf{x},\mathbf{y}\in C_n(q,d,m,r)$ and subsets $D$ and $E$ with $|D|=|E|\leq d$ such that $\mathbf{x}^{(D)} = \mathbf{y}^{(E)}$.  By Theorem \ref{th:inequality}, we have $0< |\Delta(\mathbf{x},\mathbf{y})|<m$.  It follows that $\mathbf{x}\not\cong \mathbf{y}$, a contradiction.  Thus, $C_n(q,d,m,r)$ is capable of correcting up to $d$ deletions, and therefore, can correct up to $d$ insertion-deletion errors as well due to a result of Levenshtein \cite{L}.
\end{proof}

\section{Decoding of Generalized Helberg codes}
\label{sec:3}

In this section, we describe a linear decoding algorithm to correct deletion errors in a generalized Helberg codeword $\mathbf{x} \in C_n(2,d,m,r)$ where $c$ deletions have occurred with $c\leq d$.  We first present an algorithm to correct one deletion and then provide a recursive algorithm to correct two or more deletions.

\subsection{Decoding One Deletion}
%%%%%%%%%% One Deletion- NonBinary %%%%%%%%%%%%%%%%
In the decoding of one deletion, our algorithm is the same as exhaustive trial-by-error search. Let $\mathbf{x} \in C_n(q,d,m,r)$ and $\mathbf{x}'$ be the deleted codeword obtain from $\mathbf{x}$ by deleting one symbol.  We assume $d \geq 2$; otherwise, if $d=1$, Levenshtein decoding should be used. Then $\mathbf{x}'$ has length $n-1$.  We define $\mathbf{\tilde{x}}=(\tilde{x}_1,\ldots,\tilde{x}_n)$ to be the initial decoding of $\mathbf{x}'$ where we append a variable symbol $\delta$ to $\mathbf{x}'$ at initial position $P=n$, i.e., the right-most position:
\[
\tilde{\mathbf{x}}=(x'_1,x'_2,\ldots, x'_{n-1},\delta).
\]
Let $I=M(\mathbf{x})-M(\mathbf{x}')$ denote the index (Lemma \ref{le:appendix-3} in Appendix A shows that it possible to determine $M(\mathbf{x})$ from $M(\mathbf{x}')$).  We then attempt to decode $\tilde{\mathbf{x}}$ in order to obtain the original codeword $\mathbf{x}$ so that $M(\tilde{\mathbf{x}})=M(\mathbf{x})$ by either inserting a value for $\delta$ or shifting this deleted symbol to the left of $x'_{P-1}$.  The decision is based on the following condition, which compares the current index $I$ and the moment of $\delta$ at position $P$: \\
\\
\textbf{Algorithm D1 (Decode One Deletion)}: Let $P=n$.  If $I=\sigma \cdot w_P$ for some value $\sigma \in \{0,1,\ldots, p\}$, then $\delta$ is in its correct position as the symbol that was deleted from $\mathbf{x}$.  To decode, set $\delta=\sigma$. 
Otherwise, shift $x'_{P-1}$ to the right of $\delta$ (equivalent to shifting $\delta$ to the left one position), update $I\rightarrow I-x'_{P-1}(w_P-w_{P-1})$, and update $P\rightarrow P-1$.  This is repeated until the the correct position and value for $\delta$ is found. \\

It is clear that algorithm D1 will correctly decode $\mathbf{x}'$ since it essentially performs an exhaustive search (assuming that $\mathbf{x}$ exists).  We illustrate this algorithm in the following example.

\begin{algorithm}[H]
 \floatname{algorithm}{Algorithm D1}
 \caption{}
\begin{algorithmic}[1]
%\Procedure {D1 (Decode One Deletion)}{}
\State $\tilde{\mathbf{x}}=x'_1  x'_2 x'_3 ...x'_{n-1}\delta$\indent\Comment{Initialize $\tilde{\mathbf{x}}$ by appending variable symbol $\delta$ to $\mathbf{x}'$ at position $n$, where $\delta$ is to be determined.}
\For{$P=n$ to $1$} \Comment{$P$ denotes position of $\delta$}
\For{$\sigma=q-1$ to $0$} \Comment{$\sigma$ denotes test value for $\delta$}
\If{$I=\sigma \cdot w_P$}
\State $\delta=\sigma$
\State $STOP$
\EndIf
\EndFor
\State $\tilde{\mathbf{x}}= x'_1  x'_2 x'_3 \ldots x'_{P-2} \delta x'_{P-1} x'_{P+1} \ldots x'_{n-1}$ \Comment{Shift $x'_{P-1}$ to the right of $\delta$ and update $\tilde{\mathbf{x}}$}
\State $I=I-x'_{P-1}(w_P-w_{P-1})$ \Comment{Update the index}
\EndFor
%\EndProcedure
	
\end{algorithmic}
\caption{(Decode One Deletion)}
\end{algorithm}

\begin{example} \label{ex:one-deletion}
Suppose the ternary codeword $\mathbf{x}=(1,2,2,0,2,2,1,2) \in C_8(3,2,w_9,23)$ was transmitted and $\mathbf{x}'=(1,2,2,0,2,1,2)$ was received so that one deletion occurred.  We wish to decode $\mathbf{x}'$ to recover $\mathbf{x}$.   The weights $w_i$ corresponding to this codebook are defined by the recursion $w_i=1+2(w_{i-1}+w_{i-2})$.  The first 10 weights are given in Table \ref{table:weights}.  In particular, $w_9=3861$.

 \begin{table}[ht]
\caption{Weights $w_i$ for $d=2,\;q=3$}
\centering
\begin{tabular}{||c| c|| c| c||}
\hline
$i$ & $w_i$ & $i$ & $w_i$ \\
\hline
1 & 1 & 6 & 189 \\
\hline
2 & 3 & 7 & 517 \\
\hline
3 & 9 & 8 & 1413 \\
\hline
4 & 25 & 9 & 3861 \\
\hline
5 & 69 & 10 & 10549 \\
\hline
\end{tabular}
\label{table:weights}
\end{table}

Since $m=w_9=3861$, $r=23$, and $M(\mathbf{x}')=1386$, and $M(\mathbf{x}') > r$, it follows from Lemma \ref{le:appendix-3} in Appendix A that $M(\mathbf{x})=3884$. Thus, the index $I=M(\mathbf{x})-M(\mathbf{x}')=2498$.  As defined earlier, let $\tilde{\mathbf{x}}$ be our initial decoding for $\mathbf{x}'$ where we initially insert a variable symbol $\delta$ at the right-most position of $\mathbf{x}'$, namely at position $P=8$:
\[
\tilde{\mathbf{x}}=(1,2,2,0,2,1,2,\delta ).
\]
According to algorithm D1, since $I\neq \sigma \cdot w_8$ for all $\sigma \in \{0,1,2\}$, we shift $\tilde{x}_7=2$ to the right of $\delta$, update $P\rightarrow P-1=7$, and update the index $I\rightarrow I-x_7(w_8-w_7)=706$ so that $\tilde{\mathbf{x}}$ now appears as
\[
\tilde{\mathbf{x}}=(1,2,2,0,2,1,\delta,2).
\]
Again, since $I\neq \sigma \cdot w_7$ for all $\sigma\in \{0,1,2\}$, we shift $\tilde{x}_6=1$ to the right of $\delta$, update $P\rightarrow P-1=6$, and update $I\rightarrow I-x_6(w_7-w_6)=378$.  Then
\[
\tilde{\mathbf{x}}=(1,2,2,0,2,\delta,1,2).
\]
We now find that $I=\sigma*w_6=378$ for $\sigma=2$.  In that case, we set $\delta=2$ and set $I=0$. This gives the original codeword
\[
\tilde{\mathbf{x}}=(1,2,2,0,2,2,1,2)=\mathbf{x}
\]
and completes the decoding.
\end{example}

\subsection{Decoding Two Deletions}
%%%%%%%%%%%%% Two Deletion- Binary %%%%%%%%%%%%%%%%%%%%
For binary Helberg codes capable of correcting two deletions, we shall describe a recursive algorithm to decode a codeword where two symbols have been deleted by reducing the problem to that of correcting one deletion, a problem that was solved in the previous sub-section. 

Suppose $\mathbf{x}'$ is obtained from $\mathbf{x} \in C_n(2,2,m,r)$ after deleting two symbols from $\mathbf{x}$. Then to decode $\mathbf{x}'$, whose length is $n-2$, we again define $\tilde{\mathbf{x}}$ to be an initial decoding of $\mathbf{x}$ where we insert two variable symbols $\delta_1$ and $\delta_2$ at the right end of $\mathbf{x}'$, namely at positions $P-1$ and $P$, where we initially set $P=n$:
\[
\tilde{\mathbf{x}}=(x'_1,x'_2,\ldots,x'_{n-2},\delta_1,\delta_2).
\]
We calculate $I=M(\mathbf{x})-M(\mathbf{x}')$ (use Lemma \ref{le:appendix-3} in Appendix A to determine $M(\mathbf{x})$).  Our algorithm essentially determines whether to set $\delta_2$ equal to an alphabet symbol (0 or 1), in which case the decoding reduces to the one-deletion algorithm D1, or shift $x'_{P-2}$ (initially $x'_{n-1}$) to the right of $\delta_2$.  The following conditions describe when each action is executed.  \\
\\
\noindent{\bf Algorithm D2-Binary (Decode Two Deletions):} Let $P=n$. If 
\begin{equation} \label{eq:test}
I=\sigma_1 w_{P-1}+\sigma_2 w_P
\end{equation}
for some $\sigma_1,\sigma_2 \in \{0,1\}$, then $\delta_1$ and $\delta_2$ are in their correct positions as symbols that were deleted from $\mathbf{x}$.  To decode, set $\delta_1=\sigma_1$ and $\delta_2=\sigma_2$. 

Otherwise, we assume that either $\delta_1$ or $\delta_2$ (or both) are NOT in their correct positions in what follows.  Then
\begin{enumerate}
\item For $w_P>I$: 
\begin{enumerate}
\item If $x'_{P-2}=0$, then shift $x'_{P-2}$ to the right of $\delta_2$, i.e., to the right of $\delta_2$.
\item If $x'_{P-2}=1$ and 
\begin{enumerate}
\item $I<w_P-w_{P-2}$, then set $\delta_2=0$ and update the index $I \rightarrow I-(w_{P-1}-w_{P-2})$.
\item $I\geq w_P-w_{P-2}$, then shift $x'_{P-2}$ to the right of $\delta_2$ and update the index $I \rightarrow I-(w_P-w_{P-2})$.
\end{enumerate}
\end{enumerate}
\item For $w_P < I$: 
\begin{enumerate}
\item If $x'_{P-2}=0$, then set $\delta_2=1$ and update $I \rightarrow I-w_P$.
\item If $x'_{P-2}=1$, then shift $x'_{n-2}$ to the right of $\delta_2$ and update $I \rightarrow I-(w_P-w_{P-2})$. 
\end{enumerate}
\end{enumerate}
Update $P\rightarrow P-1$ and repeat algorithm until the correct position and value for $\delta_2$ is found.  If $\delta_2$ is found but $\delta_1$ remains unknown, then apply the one-deletion algorithm D1 to determine $\delta_1$.

\begin{proof}[Proof of Algorithm D2-Binary] To prove conditions (1) and (2) are valid, we argue as follows.  \\
(1) Suppose $w_P>I$.  We consider two cases: \\
\\
(a) $x'_{P-2}=0$.  We consider two situations and show that $x'_{P-2}$ should be shifted to the right of $\delta_2$ in both situations: \\
(i) $\delta_2$ is in its correct position as the right-most deleted symbol.  In that case, since $w_P>I$, there is only one choice of symbol for $\delta_2$, namely $\delta_2=0$; otherwise, if $\delta_2=1$, then the moment for $\tilde{\mathbf{x}}$ will exceed that of $\mathbf{x}$ up to position $P$, regardless of the position of $\delta_1$ in the final decoding for $\tilde{\mathbf{x}}$:
\begin{align*}
M_P(\tilde{\mathbf{x}}) & \geq M_P(\mathbf{x}')+M(\delta_2)= M_P(\mathbf{x}')+w_P\delta_2 \\
& > M_P(\mathbf{x}')+I=M_P(\mathbf{x}).
\end{align*}
But observe that setting $\delta_2=0$ is equivalent to shifting $x'_{P-2}=0$ to the right of $\delta_2$.  Thus, we choose to shift instead. \\
(ii) $\delta_2$ is NOT in its correct position as the right-most deleted symbol.  In that case, we are forced to shift $x'_{P-2}$ to the right of $\delta_2$. \\
\\
(b) $x'_{P-2}=1$.  We consider two sub-cases: \\
(i) $I<w_P-w_{P-2}$.  We claim that $\delta_2$ is in its correct position as the right-most deleted symbol.  Otherwise, we are forced to shift $x'_{P-2}$ to the right of $\delta_2$, but then the moment of $\tilde{\mathbf{x}}$ will exceed that of $\mathbf{x}$ up to position $P$:
\begin{align*}
M_P(\tilde{\mathbf{x}}) & \geq M_P(\mathbf{x}')+(w_P-w_{P-2} ) \\
& \geq M_P(\mathbf{x})-I +(w_P-w_{P-2} )> M_P(\mathbf{x}).
\end{align*}
\noindent Thus, $\delta_2$ is in its correct position and moreover, $\delta_2=0$, since $w_P>I$. \\
(ii) $I\geq w_P-w_{P-2}$.  We claim that $\delta_2$ is NOT in its correct position.  Otherwise, $\delta_2=0$ since $w_P>I$ and so
\[
\tilde{\mathbf{x}}=(x'_1,x'_2,\dots,x'_{P-2},\delta_1,\delta_2=0,x'_{P+1},\ldots,x'_n).
\]
 But then the moment of $\tilde{\mathbf{x}}$, which is maximized if $\delta_1=1$, will always be strictly less than the moment of $\mathbf{x}$ up to position $P$:
\begin{align*}
M_P(\tilde{\mathbf{x}}) & \leq M_P(\mathbf{x}')+w_{P-1} < M_P(\mathbf{x}')+w_P - w_{P-2} \\
& < M_P(\mathbf{x}')+I=M_P(\mathbf{x}).
\end{align*}
\noindent Thus, $\delta_2$ is not in its correct position.  Therefore, $\mathbf{x}'_{P-2}$ should be shifted to the right of $\delta_2$. \\
\\
(2) Suppose $w_P< I$.  We again consider two cases: \\
\\
(a) $x'_{P-2}=0$.  We claim that $\delta_2$ is in its correct position as the right-most deleted symbol.  Otherwise, if $\delta_2$ is not in its correct position, then we are forced to shift $x'_{P-2}$ to the right of $\delta_2$, in which case 
\[
\tilde{\mathbf{x}}=(x'_1,x'_2,\ldots, x'_{P-3},\delta_1, \delta_2,x'_{P-2}=0,\ldots, x'_n).
\]
But then the moment of $\tilde{\mathbf{x}}$, which is maximized if $\delta_1=\delta_2=1$, will always be less than the moment of $\mathbf{x}$ up to position $P$:
\begin{align*}
M_P(\tilde{\mathbf{x}}) & \leq M_P(\mathbf{x}')+(w_{P-2}+w_{P-1}) \\
& \leq M_P(\mathbf{x}')+w_P< M_P(\mathbf{x}')+I = M_P(\mathbf{x}).
\end{align*}
Thus, $\delta_2$ is in its correct position.  Next, we claim that $\delta_2=1$.  Otherwise, if $\delta_2=0$, then the moment of $\tilde{\mathbf{x}}$, which is maximized if $\delta_1=1$, will always be less than the moment of $\mathbf{x}$ up to position $P$:
\begin{align*}
M_P(\tilde{\mathbf{x}}) & \leq M_P(\mathbf{x}')+w_{P-1}< M_P(\mathbf{x}')+w_P \\
& < M_P(\mathbf{x}')+I = M_P(\mathbf{x}).
\end{align*}
\\
(b) $x'_{P-2}=1$. We consider two situations and show that $x'_{P-2}$ should be shifted to the right of $\delta_2$ in both situations: \\
(i) $\delta_2$ is in its correct position.  We claim that $\delta_2=1$.  Otherwise, if $\delta_2=0$, then the moment of $\tilde{\mathbf{x}}$, which is maximized if $\delta_1=1$, will always be less than the moment of $\mathbf{x}$ up to position $P$:
\begin{align*}
M_P(\tilde{\mathbf{x}}) & \leq M_P(\mathbf{x}')+(w_{P-2}+w_{P-1}) \\
& <M_P(\mathbf{x}')+w_P < M_P(\mathbf{x}')+I = M_P(\mathbf{x}).
\end{align*}
Thus, $\delta_2=1$.  But observe that setting $\delta_2=1$ is equivalent to shifting $x'_{P-2}=1$ to the right of $\delta_2$.  Thus, we choose to shift instead. \\
(ii) $\delta_2$ is NOT in its correct position.   In that case, we are forced to shift $x'_{P-2}$ to the right of $\delta_2$. \\
This completes the proof.
\end{proof}

We now demonstrate algorithm D2-Binary in the following example to show how the problem of decoding two deletions can be reduced to that of decoding one deletion.

\begin{algorithm}[H]
 \floatname{algorithm}{Algorithm D2-Binary}
\begin{algorithmic}[1]
\State $\tilde{\mathbf{x}}=x'_1  x'_2 \ldots x'_{n-2}\delta_1\delta_2$\Comment{Initialize $\tilde{\mathbf{x}}$ by appending variable symbols $\delta_1$ and $\delta_2$ to $\mathbf{x}'$ at positions $n-1$ and $n$ respectively, where $\delta_1$ and $\delta_2$ are to be determined.}

\For {$P=n$ to $1$} \Comment {$P$ denotes position of $\delta_2$}
\For {$\sigma_1,\sigma_2 = q-1$ to $1$} \Comment{Double nested for loop}
\If {$I=\sigma_1 w_{P-1}+\sigma_2 w_P$}
\State{$\delta_1=\sigma_1$, $\delta_2=\sigma_2$}
\State{STOP}
\EndIf
\EndFor
\If{$w_P>I$}
\If{($x'_{P-2}=0$) or ($x'_{P-2}=1$ and $I\geq w_P-w_{P-2}$)}
\State $\tilde{\mathbf{x}}=x'_1 x'_2...x'_{P-3}\delta_1 \delta_2 x'_{P-2} x'_{P+1}...x'_{n-2}$ \Comment{Shift $x'_{P-2}$ to the right of $\delta_2$ and update $\tilde{\mathbf{x}}$}
\State  $I = I - x'_{P-2}(w_P-w_{P-2})$ \Comment{Update the index}
\Else
\State $\mathbf{x}'=x'_1 x'_2 ...x'_{P-2}\delta_1 0 x'_{P+1}...x'_{n-2}$ \Comment{Insert $0$ for $\delta_2$ and update $\tilde{\mathbf{x}}$}
\State {Call algorithm D1 to decode $\delta_1$}
\State{STOP}
\EndIf
\Else \Comment{$w_P<I$}
\If{$x'_{P-2}=0$}
\State $\mathbf{x}'=x'_1 x'_2...x'_{P-2}\delta_1 1 x'_{P+1}...x'_{n-2}$ \Comment{Insert $1$ for $\delta_2$ and update $\tilde{\mathbf{x}}$}
\State $I = I - w_P$ \Comment{Update the index}
\Else \Comment{$x'_{P-2}=1$}
\State $x'=x'_1 x'_2...x'_{P-3}\delta_1 \delta_2 x'_{P-2} x'_{P+1}...x'_{n-2}$ \Comment{Shift $x'_{P-2}$ to the right of $\delta_2$ and update $\tilde{\mathbf{x}}$}
\State $I = I - x'_{P-2}(w_{P}-w_{P-2})$ \Comment{Update the index}
\State {Call algorithm D1 to decode $\delta_1$}
\State{STOP}
\EndIf
\EndIf
\EndFor
\end{algorithmic}
\caption{(Decode Two Deletions)}
\end{algorithm}

\begin{example}
 Suppose $x\in C_{10}(2,2,w_{11},62)$ was transmitted and $\mathbf{x}'=(1,1,0,1,0,1,0,1)$ was received so that two deletions occurred. 
The weights $w_i$ are defined by $w_i=1+w_{i-1}+w_{i-2}$ (see Table \ref{table:weights-2}).  Therefore $m=w_{11}=232$ and $M(\mathbf{x}')=84$. 

 \begin{table}[ht]
\caption{Weights $w_i$ for $d=2,\;q=2$}
\centering
\begin{tabular}{||c| c|| c| c||}
\hline
$i$ & $w_i$ & $i$ & $w_i$ \\
\hline
1 & 1 & 7 & 33 \\
\hline
2 & 2 & 8 & 54 \\
\hline
3 & 4 & 9 & 88 \\
\hline
4 & 7 & 10 & 143 \\
\hline
5 & 12 & 11 & 232 \\
\hline
6 & 20 & 12 & 376 \\
\hline
\end{tabular}
\label{table:weights-2}
\end{table}

Since $M(\mathbf{x}')>r=62$, it follows that $M(\mathbf{x})=r+m=294$. Thus, $I=M(\mathbf{x})-M(\mathbf{x}')=210$. We initialize
 \[
 \tilde{\mathbf{x}}=(1,1,0,1,0,1,0,1, \delta_1 , \delta_2)
 \]
and apply algorithm D2.  Since (\ref{eq:test}) fails, we compare $w_{10}=143$ with $I$.  As $w_{10}<I$ and $x'_8=1$, we shift $x'_8$ to the right of $\delta_2$ and update the index: $I\rightarrow I-(w_{10}-w_8)=121$.  Then $\tilde{\mathbf{x}}$ takes the form
\[
 \tilde{\mathbf{x}}=(1,1,0,1,0,1,0, \delta_1 , \delta_2,1).
 \]
Again, since (\ref{eq:test}) fails, we compare $w_9=88$ with $I$.  As $w_9<I$ and $x'_7=0$, we set $\delta_2=1$ and update $I\rightarrow I-w_9=33$ so that
  \[
 \tilde{\mathbf{x}}=(1,1,0,1,0,1,0, \delta_1 ,1,1).
 \]
 From here, we apply algorithm D1 to determine $\delta$, which yields $\delta_1=1$ at position $7$.  Thus, 
\[
\tilde{\mathbf{x}}=(1,1,0,1,0,1,1,0,1,1)=\mathbf{x}.
\]
\end{example}

\subsection{Decoding Multiple Deletions}
%%%%%%%%%%%%%%%% Multiple Deletion - NonBinary %%%%%%%%%%

Suppose $\mathbf{x}'$ is obtained from $\mathbf{x} \in C_n(q,d,m,r)$ after deleting $c$ symbols from $\mathbf{x}$, where $2\leq c\leq d$. Then to decode $\mathbf{x}'$, whose length is $n-c$, we again define $\tilde{\mathbf{x}}$ to be an initial decoding of $\mathbf{x}$ where we insert $c$ variable symbols $\delta_1, \ldots,\delta_c$ at the right end of $\mathbf{x}'$, namely at positions $P-c+1,\ldots, P$, where we initially set $P=n$:
\[
\tilde{\mathbf{x}}=(x'_1,x'_2,\ldots,x'_{n-c},\delta_1,\ldots, \delta_c).
\]
We calculate $I=M(\mathbf{x})-M(\mathbf{x}')$ (use Lemma \ref{le:appendix-3} in Appendix A to determine $M(\mathbf{x})$).  As before, our algorithm essentially determines whether to set the right-most symbol $\delta_c$ equal to an alphabet symbol ($0,\ldots, q-1$), in which case the decoding reduces to algorithm DM for $c-1$ deletions, or shift $x'_{P-c}$ (initially $x'_{n-1}$) to the right of $\delta_c$.  The following conditions describe when each action is executed.  \\
 \\
{\bf Algorithm DM (Decode Multiple Deletions):} Let $P=n$.  If 
\begin{equation} \label{eq:test-2}
I=\sigma_1w_{P-c+1}+\sigma_2 w_{P-c+2}+\ldots + \sigma_c w_P
\end{equation}
for a set of values $\sigma_1,\ldots,\sigma_c \in \{0,1,\ldots,p\}$, then $\delta_1,\ldots,\delta_c$ are in their correct positions as symbols that were deleted from $\mathbf{x}$. To decode, set $\delta_i=\sigma_i$ for $i=1,\ldots,c$.  Otherwise, we assume that at least one of the symbols $\delta_1,\ldots,\delta_c$ are NOT in their correct positions in what follows:
Define
\[
\sigma _{\max}=\max\{\sigma: \sigma(w_P-w_{P-c})<I, \sigma = 0,1,\ldots, p \}.
\]
Then
\begin{enumerate}
\item For $w_P>I$: 
\begin{enumerate}
\item If $x'_{P-c}=0$, then shift $x'_{P-c}$ to the right of $\delta_c$, update the position $P\rightarrow P-1$, and repeat algorithm.

\item If $x'_{P-c}\geq 1$ and
\begin{enumerate}
\item $I < w_P-w_{P-c}$, then set $\delta_c=0$ and apply algorithm DM on the truncated codeword $(\tilde{\mathbf{x}})_{P-1} = (x'_1,\ldots,x'_{P-c},\delta_1,\ldots,\delta_{c-1})$ with index $I$ to correct $c-1$ deletions.
\item $I\geq w_P-w_{P-c}$, then shift $x'_{P-c}$ to the right of $\delta_c$, update the index $I \rightarrow I-x'_{P-c}(w_P-w_{P-c})$, update the position $P\rightarrow P-1$, and repeat algorithm.
\end{enumerate}
\end{enumerate}

\item For $w_P < I$: 
\begin{enumerate}
\item If $x'_{P-c} > \sigma_{\max}$, then set $\delta_c=\sigma_{\max}$, update the index $I\rightarrow I-\sigma_{\max}w_P$, and apply algorithm DM on the truncated codeword $(\tilde{\mathbf{x}})_{P-1} = (x'_1,\ldots,x'_{P-c},\delta_1,\ldots,\delta_{c-1})$ with index $I$ to correct $c-1$ deletions.
\item If $x'_{P-c} < \sigma_{\max}$ and
\begin{enumerate}
\item $\sigma_{\max} w_P\leq I$, then set $\delta_c=\sigma_{\max}$, update the index $I\rightarrow I-\sigma_{\max}w_P$, and apply algorithm DM on the truncated codeword $(\tilde{\mathbf{x}})_{P-1} = (x'_1,\ldots,x'_{P-c},\delta_1,\ldots,\delta_{c-1})$ with index $I$ to correct $c-1$ deletions.
\item $\sigma_{\max} w_P > I$, then shift $x'_{P-c}$ to the right of $\delta_c$, update the position $P\rightarrow P-1$, and repeat algorithm.
\end{enumerate}
\item If $x'_{P-c} = \sigma_{\max}$, then shift $x'_{P-c}$ to the right of $\delta_c$, update the index $I \rightarrow I-\sigma_{\max}(w_P-w_{P-c})$, update the position $P\rightarrow P-1$, and repeat algorithm.
\end{enumerate}
\end{enumerate}

 \begin{proof} We prove that the conditions (1) and (2) in algorithm DM give a correct decoding of $\mathbf{x}'$. \\
(1) Suppose $w_P>I$.  We consider two cases: \\
\\
(a) $x'_{P-c}=0$.  We consider two situations and show that $x'_{P-c}$ should be shifted to the right of $\delta_c$ in both situations: \\
(i) $\delta_c$ is in its correct position as the right-most deleted symbol.  In that case, since $w_P>I$, there is only one choice of symbol for $\delta_c$, namely $\delta_c=0$; otherwise, if $\delta_c \geq 1$, then the moment for $\tilde{\mathbf{x}}$ will exceed that of $\mathbf{x}$ up to position $P$, regardless of the position and values of the other symbols $\delta_1,\ldots,\delta_{c-1}$ in the final decoding for $\tilde{\mathbf{x}}$:
\begin{align*}
M_P(\tilde{\mathbf{x}}) & \geq M_P(\mathbf{x}')+M(\delta_c)= M_P(\mathbf{x}')+w_P\delta_c \\
& > M_P(\mathbf{x}')+I=M_P(\mathbf{x}).
\end{align*}
But observe that setting $\delta_c=0$ is equivalent to shifting $x'_{P-c}=0$ to the right of $\delta_c$ (and later setting $\delta_1,\ldots, \delta_{c-1}$ equal to appropriate values determined by our algorithm).  Thus, we choose to shift instead. \\
(ii) $\delta_c$ is NOT in its correct position as the right-most deleted symbol.  In that case, we are forced to shift $x'_{P-c}$ to the right of $\delta_c$. \\
\\
(b): $x'_{P-c}\geq 1$.  We consider two sub-cases: \\
(i) $I< w_P-w_{P-c}$.  We claim that $\delta_c$ is in its correct position as the right-most deleted symbol.  Otherwise, we are forced to shift $x'_{P-c}$ to the right of $\delta_c$, but then the moment of $\mathbf{x}'$ will exceed that of $\mathbf{x}$ up to position $P$, regardless of the position and values of the other symbols $\delta_1,\ldots,\delta_{c-1}$:
\begin{align*}
M_P(\tilde{\mathbf{x}}) & \geq M_P(\mathbf{x}')+x'_{P-c}(w_P-w_{P-c} ) \\
& \geq M_P(\mathbf{x})-I +(w_P-w_{P-c} )> M_P(\mathbf{x}).
\end{align*}
\noindent Thus, $\delta_c$ is in its correct position and moreover, $\delta_c=0$ since $w_P>I$. \\
(ii) $I\geq w_P-w_{P-c}$.  We claim that $\delta_c$ is NOT in its correct position.  Otherwise, $\delta_c=0$ since $w_P>I$ and so
\[
\tilde{\mathbf{x}}=(x'_1,x'_2,\dots,x'_{P-c},\delta_1,\ldots,\delta_{c-1},0,x'_{P+1},\ldots,x'_n).
\]
But then the moment of $\tilde{\mathbf{x}}$, which is maximized if $\delta_1=\ldots=\delta_{c-1}=p$ (recall $p=q-1$), will always be strictly less than the moment of $\mathbf{x}$ up to position $P$:
\begin{align*}
M_P(\tilde{\mathbf{x}}) & \leq M_P(\mathbf{x}')+p(w_{P-c+1}+\ldots + w_{P-1}) \\
& < M_P(\mathbf{x}')+w_P - p (w_{P-c} + \ldots + w_{P-d})  \\
& < M_P(\mathbf{x})+w_P-w_{P-c}  \\
& \leq M_P(\mathbf{x}')+I=M_P(\mathbf{x}).
\end{align*}
\noindent Thus, $\delta_c$ is not in its correct position.  Therefore, $\mathbf{x}'_{P-c}$ should be shifted to the right of $\delta_c$. \\
\\
(2) Suppose $w_P< I$.  We first prove that if $\delta_c$ is in its correct position, then $\delta_c=\sigma_{\max}$.  We rule out all other possible values as follows: \\
(i) Suppose $\delta_c = \sigma < \sigma_{\max}$.  But then the moment of $\tilde{\mathbf{x}}$ up to position $P$, which is maximized if $\delta_1=\ldots=\delta_{c-1}=p$, will always be less than the moment of $\mathbf{x}$ because of the following calculation (recall the recurrence satisfied by $w_P$ and the fact that $\sigma_{\max}(w_P-w_{P-c}) < I$):
\begin{align*}
& M_P(\tilde{\mathbf{x}}) \\
& \leq M_P(\mathbf{x}')+p(w_{P-c+1}+\ldots + w_{P-1}) +\sigma\cdot w_P \\
& < M_P(\mathbf{x}')+(\sigma+1) w_P - p (w_{P-c}+\ldots + w_{P-d}) \\
& < M_P(\mathbf{x})+\sigma_{\max}(w_P-w_{P-c}) \\
& < M_P(\mathbf{x}')+I  =M_P(\mathbf{x}).
\end{align*}
(ii) Suppose $\delta_c = \sigma > \sigma_{\max}$.  But then the moment of $\tilde{\mathbf{x}}$ up to position $P$, which is minimized if $\delta_1=\ldots=\delta_{c-1}=0$, will always be greater than the moment of $\mathbf{x}$ because of a similar calculation:
\begin{align*}
M_P(\tilde{\mathbf{x}}) & \geq M_P(\mathbf{x}')+\sigma\cdot w_P \\
& > M_P(\mathbf{x}')+\sigma(w_P - w_{P-c}) \\
& \geq M_P(\mathbf{x}')+I =M_P(\mathbf{x}).
\end{align*}
Thus, $\delta_c=\sigma_{\max}$ if it is in its correct position.

Next, we consider three cases: \\
\\
(a) $x'_{P-c} > \sigma_{\max}$.  We claim that $\delta_c$ is in its correct position.  Otherwise, we are forced to shift $x'_{P-c}$ to the right of $\delta_c$, but then the moment of $\tilde{\mathbf{x}}$, which is minimized if $\delta_1=\ldots=\delta_{c-1}=0$, will always be greater than the moment of $\mathbf{x}$ up to position $P$:
\begin{align*}
M_P(\tilde{\mathbf{x}}) & \geq M_P(\mathbf{x}')+x'_{P-c}(w_P - w_{P-c}) \\
& > M_P(\mathbf{x}')+I  =M_P(\mathbf{x}).
\end{align*}
Thus, $\delta_c$ is in its correct position and as we argued previously, $\delta_c=\sigma_{\max}$. \\
\\
(b) $x'_{P-c}< \sigma_{\max}$.  We consider two sub-cases: \\
(i) $\sigma_{\max} w_P\leq I$.  We claim that $\delta_c$ is in its correct position.  Otherwise, we are forced to shift $x'_{P-c}$ to the right of $\delta_c$, but then the moment of $\tilde{\mathbf{x}}$, which is maximized if $\delta_1=\ldots=\delta_{c-1}=p$, will always be less than the moment of $\mathbf{x}$ up to position $P$.  This is because
\begin{align*}
M_P(\tilde{\mathbf{x}}) & \leq M_P(\mathbf{x}')+p(w_{P-c}+\ldots + w_{P-1}) \\
& \ \ \ \ +x'_{P-c} (w_P-w_{P-c}) \\
& < M_P(\mathbf{x}')+ w_P -p(w_{P-c-1}\ldots+ w_{P-d})  \\
& \ \ \ \ + x'_{P-c} (w_P-w_{P-c})  \\
& < M_P(\mathbf{x}')+(1+x'_{P-c}) w_P \\
\end{align*}
Next, we use the fact that $x'_{P-c}< \sigma_{\max}$ to obtain
\begin{align*}
M_P(\tilde{\mathbf{x}}) & \leq M_P(\mathbf{x}')+\sigma_{\max} w_P \\
& \leq M_P(\mathbf{x}')+I  =M_P(\mathbf{x}).
\end{align*} 
Thus, we set $\delta_c=\sigma_{\max}$.\\
\\
(ii) $\sigma_{\max}w_P > I$.  We claim that $\delta_c$ is NOT in its correct position.  Otherwise, if $\delta_c$ is in its correct position, then we must have $\delta_c= \sigma_{\max}$ and so the moment of $\tilde{\mathbf{x}}$  up to position $P$, which is minimized if $\delta_1=\ldots=\delta_{c-1}=0$, will always be greater than the moment of $\mathbf{x}$:
\begin{align*}
M_P(\tilde{\mathbf{x}}) & \geq M_P(\mathbf{x}')+\sigma_{\max}w_P \\
& > M_P(\mathbf{x}')+I =M_P(\mathbf{x}).
\end{align*}
Thus, we shift $x'_{P-c}$ to the right of $\delta_c$. \\
\\
(c) $x'_{P-c}=\sigma_{\max}$.  In this case, observe that if $\delta_c$ is in its correct position, then $\delta_c =\sigma_{\max}$, but this same result can be achieved by shifting $x'_{P-c}$ to the right of $\delta_c$ (and later setting $\delta_1,\ldots, \delta_{c-1}$ equal to appropriate values determined by our algorithm).  Thus, we choose to shift instead.  This completes the proof.
\end{proof}

 We demonstrate algorithm DM with the following example.

\begin{example} Suppose a ternary codeword $\mathbf{x}\in C_8(3,2,w_9,23)$ of length 8 was transmitted and the deleted codeword received $\mathbf{x}'=(1,2,2,0,1,2)$ was received where two symbols were deleted.  We have $m=w_{9} = 3861$; see Table \ref{table:weights} for a list of the weights $w_i$.
 
Since $M(\mathbf{x}') =504 > r$, it follows from Lemma \ref{le:appendix-3} that $M(\mathbf{x}) = m+r = 3884$.  Thus, the index $I = M(\mathbf{x})-M(\mathbf{x}')=3380$. We now apply algorithm DM by defining our initial decoding as
\[
\tilde{\mathbf{x}}=(1,2,0,2,1,2,\delta_1,\delta_2).
\]
Since (\ref{eq:test-2}) fails, we compare $w_8=1413$ with $I$.  As $w_8 < I$, we compute $\sigma_{\max}=2$.  Since $x'_6=2=\sigma_{\max}$, we shift $x'_6$ to the right of $\delta_2$ and update $I\rightarrow I-x'_6(w_8-w_6)=932$ so that
\[
\tilde{\mathbf{x}}=(1,2,0,2,1,\delta_1,\delta_2,2).
\]
Again, since (\ref{eq:test-2}) fails, we compare $w_7=517$ with $I$.  As $w_7<I$, we calculate $\sigma_{\max}=2$.  Since $x'_5=1<\sigma_{\max}$ and $\sigma_{\max} w_7 =1034 > I$, we shift $x'_5$ to the right of $\delta_2$ and update $I\rightarrow I-x'_5(w_7-w_5)=484$ so that
\[
\tilde{\mathbf{x}}=(1,2,0,2,\delta_1,\delta_2,1,2).
\]
Since (\ref{eq:test-2}) fails again, we compare $w_6=189$ with $I$.  As $w_6 < I$, we calculate $\sigma_{\max}=2$.  Since $x'_4=2=\sigma_{\max}$, we shift $x'_6$ to the right of $\delta_2$ and update $I\rightarrow I - x'_4(w_6-w_4)=156$ so that
\[
\tilde{\mathbf{x}}=(1,2,0,\delta_1,\delta_2,2,1,2).
\]
Again, since (\ref{eq:test-2}) fails, we compare $w_5=69$ with $I$.  As $w_5< I$, we calculate $\sigma_{\max}=2$.  Since $x'_3=0 < \sigma_{\max}$ and $\sigma_{\max} w_5 =138 < I$, we set $\delta_2=\sigma_{\max}=2$ and update the index $I\rightarrow I-\sigma_{\max} w_5 = 18$.  This yields
\[
\tilde{\mathbf{x}}=(1,2,0,\delta_1,2,2,1,2).
\]

It remains to apply algorithm D1 on the truncated codeword $(\tilde{\mathbf{x}})_4=(1,2,0,\delta_1)$ with $I=18$ to decode $\delta_1$.  Following Example \ref{ex:one-deletion}, we find that $\delta_1=2$ should be inserted at position 3.  Hence, our final decoding is
\[
\tilde{\mathbf{x}}=(1,2,2,0,2,2,1,2)=\mathbf{x}.
\]
\end{example}

\section{Appendix}

\subsection{Useful Lemmas}
In this appendix, we aim to show that the moment of a codeword is strictly less than twice the modulus defining its codebook.  This allows us to precisely determine its moment based on the moment of the deleted codeword.

%%% Lemma 1
\begin{lemma} \label{le:appendix-1}
 For $d \geq 2$,
\begin{equation}
\label{eq:summation}
\sum\limits_{i=1}^n w_i = \frac {p \left (\sum\limits_{i=0}^{d-1} (d-i)w_{n-i} \right ) -n}{pd-1}.
\end{equation}
\end{lemma}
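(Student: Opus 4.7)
The plan is to sum the defining recurrence $w_i = 1 + p\sum_{j=1}^d w_{i-j}$ over $i = 1, \ldots, n$ and solve for the partial sum $S_n := \sum_{i=1}^n w_i$. Directly summing gives
\[
S_n \;=\; n \;+\; p\sum_{i=1}^n \sum_{j=1}^d w_{i-j},
\]
so the first task is to simplify the double sum on the right.

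Next, I would swap the order of summation and invoke the boundary condition $w_k = 0$ for $k \leq 0$ to rewrite
\[
\sum_{i=1}^n \sum_{j=1}^d w_{i-j} \;=\; \sum_{j=1}^d S_{n-j},
\]
where $S_{n-j} := \sum_{k=1}^{n-j} w_k$ (interpreted as $0$ when $n-j \leq 0$). Each $S_{n-j}$ is then expressed in terms of $S_n$ via $S_{n-j} = S_n - \sum_{i=n-j+1}^n w_i$, yielding
\[
\sum_{j=1}^d S_{n-j} \;=\; d\,S_n \;-\; \sum_{j=1}^d \sum_{i=n-j+1}^n w_i.
\]

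The key bookkeeping step, and the only piece that requires care, is counting how many times each $w_{n-k}$ appears in the remaining double sum. For $0 \leq k \leq d-1$, the term $w_{n-k}$ is picked up exactly for those $j$ with $n - j + 1 \leq n - k \leq n$, i.e., $j = k+1, \ldots, d$, which gives multiplicity $d - k$. Hence
\[
\sum_{j=1}^d \sum_{i=n-j+1}^n w_i \;=\; \sum_{k=0}^{d-1} (d-k)\, w_{n-k}.
\]

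Substituting back produces $S_n = n + p\,d\,S_n - p\sum_{k=0}^{d-1}(d-k)w_{n-k}$, and collecting the $S_n$ terms yields
\[
(pd - 1)\, S_n \;=\; p\sum_{k=0}^{d-1}(d-k)\, w_{n-k} \;-\; n,
\]
from which \eqref{eq:summation} follows after dividing by $pd - 1$ (which is positive since $d \geq 2$ and $p \geq 1$). I would also briefly flag that the hypothesis $d \geq 2$ ensures $pd - 1 \neq 0$ even in the degenerate case $p = 1$ (i.e., $q = 2$); the derivation otherwise goes through for all $q \geq 2$ without change.
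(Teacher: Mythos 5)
Your proof is correct, but it takes a genuinely different route from the paper. The paper argues by induction on $n$: it verifies the case $n=1$, then adds $w_{n+1}$ to the inductive hypothesis, rewrites $w_{n+1}$ as $\frac{pd\,w_{n+1}-w_{n+1}}{pd-1}$ using the recurrence, and re-indexes the sum to recover the formula at $n+1$. You instead sum the defining recurrence over $i=1,\ldots,n$, interchange the order of summation using the convention $w_k=0$ for $k\leq 0$, and solve the resulting linear equation for $S_n$. The one delicate step in your argument --- counting that $w_{n-k}$ occurs with multiplicity $d-k$ in $\sum_{j=1}^d\sum_{i=n-j+1}^n w_i$ for $0\leq k\leq d-1$ --- is handled correctly, and your observation that $pd-1>0$ under the hypothesis $d\geq 2$ (so the division is legitimate) matches what the paper implicitly relies on. The trade-off: your summation argument is constructive, deriving the closed form and making visible where the coefficients $d-i$ and the denominator $pd-1$ come from, whereas the paper's induction is shorter to write but requires already knowing the formula to be verified. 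Both are complete and correct.
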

\begin{proof} We argue by induction on $n$.  It is straightforward to verify that (\ref{eq:summation}) holds for $n=1$.  Next, assume that (\ref{eq:summation}) holds for arbitrary $n$.
Then for $n+1$, since
\begin{align*}
 \sum\limits_{i=1}^{n+1} w_i  =\sum\limits_{i=1}^n w_i + w_{n+1},
 \end{align*}
 it follows from the inductive hypothesis and the recurrence for $w_{n+1}$ that
 \begin{align*}
 &  \sum\limits_{i=1}^{n+1} w_i  \\
 & =\frac{p \left ( \sum\limits_{i=0}^{d-1} (d-i)w_{n-i} \right ) - n}{pd-1} + \frac{pd(w_{n+1})-w_{n+1}}{pd-1} \\
 & =\frac { p \left ( \sum\limits_{i=0}^{d-2} (d-i-1)w_{n-i}\right ) +pd(w_{n+1})-(n+1)}{pd-1}, \\
\end{align*}
Then re-index the summation on the right-hand side and simplifying yields
\begin{align*}
 \sum\limits_{i=1}^{n+1} w_i  = \frac{p \left ( \sum\limits_{i=0}^{d-1} (d-i)w_{n+1-i} \right ) - (n+1)}{pd-1}.
\end{align*}
Hence, (\ref{eq:summation}) holds for $n+1$.
\end{proof}

%%%%% Lemma 2
\begin{lemma} \label{le:appendix-2}
 For $d \geq 2$,
\begin{equation}
\label{eq:bound-2}
\sum\limits_{i=1}^n w_i < \frac {d}{pd-1} w_{n+1}.
\end{equation}
\end{lemma}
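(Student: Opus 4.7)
The plan is to reduce the claim directly to Lemma \ref{le:appendix-1} together with the defining recurrence for $w_{n+1}$. Since $pd-1 > 0$ for $d \geq 2$ and $p \geq 1$, we can clear the denominator on both sides and it will suffice to prove the polynomial-style inequality
\[
p\sum_{i=0}^{d-1}(d-i)\,w_{n-i} - n \;<\; d\,w_{n+1}.
\]

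Next, I would substitute the recurrence $w_{n+1} = 1 + p\sum_{j=1}^{d} w_{n+1-j}$. Re-indexing by $i = j-1$ gives $w_{n+1} = 1 + p\sum_{i=0}^{d-1} w_{n-i}$, so
\[
d\,w_{n+1} \;=\; d + pd\sum_{i=0}^{d-1} w_{n-i}.
\]
Plugging this into the target inequality and moving everything to one side, the $pd$ term and the $p(d-i)$ term combine to leave
\[
-p\sum_{i=0}^{d-1} i\,w_{n-i} \;<\; d+n.
\]
This is immediate: each $w_{n-i}$ is non-negative (by the initialization $w_k = 0$ for $k \leq 0$ and the recurrence's positivity for $k \geq 1$), so the left-hand side is $\leq 0$ while the right-hand side is strictly positive. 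This closes the proof.

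I do not anticipate a genuine obstacle here; the only thing to watch is the bookkeeping when re-indexing the sum from Lemma \ref{le:appendix-1} so that the coefficients $(d-i)$ and $d$ align on the same index set $\{0,1,\dots,d-1\}$ before subtracting. Once that alignment is correct, the $w_{n-i}$ terms telescope in coefficient, and strict inequality comes for free from the $+d$ (or equivalently the $+n$) constant term that survives the cancellation.
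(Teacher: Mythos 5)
Your proof is correct and uses the same ingredients as the paper's: Lemma \ref{le:appendix-1} plus the recurrence $w_{n+1}=1+p\sum_{i=0}^{d-1}w_{n-i}$, with the observation that the coefficients $(d-i)$ are dominated by $d$ and the leftover constant $d+n>0$ supplies strictness. The only difference is cosmetic — you clear the denominator and cancel first, whereas the paper bounds $(d-i)\le d$ inside the fraction and then invokes the recurrence — so this is essentially the paper's argument.
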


\begin{proof}
 It follows from Lemma \ref{le:appendix-1} that
\begin{align*}
\sum\limits_{i=1}^n w_i & = \frac {p \left ( \sum\limits_{i=0}^{d-1} (d-i) w_{n-i} \right ) -n}{pd-1} \\
& < \frac {pd \left ( \sum\limits_{i=0}^{d-1} w_{n-i} \right ) -n}{pd-1} = \frac {d(w_{n+1} -1) -n}{pd-1}
\\
& < \frac {d}{pd-1}w_{n+1}.
\end{align*}
This proves (\ref{eq:bound-2}).
\end{proof}

%%%%%%% Lemma 3
\begin{lemma} \label{le:appendix-3} Let $\mathbf{x}\in C_n(q,d,m,r)$.  Suppose $\mathbf{x}'$ is obtained by deleting $c$ symbols from $\mathbf{x}$, where $c\leq d$.
If $M(\mathbf{x}') >r$, then $M(\mathbf{x})=r+m$. Otherwise, if $M(\mathbf{x}')\leq r$, then $M(\mathbf{x})=r$.
\end{lemma}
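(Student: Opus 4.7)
The plan is to show that $M(\mathbf{x})$ can take at most two values, namely $r$ or $r+m$, and then use the comparison between $M(\mathbf{x}')$ and $r$ to decide which.

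First, I would establish the range $0 \leq M(\mathbf{x}) < 2m$. The lower bound is immediate since each weight and each symbol is nonnegative. For the upper bound, I invoke Lemma \ref{le:appendix-2}:
\[
M(\mathbf{x}) = \sum_{i=1}^n w_i x_i \leq p \sum_{i=1}^n w_i < \frac{pd}{pd-1}\, w_{n+1} \leq \frac{pd}{pd-1}\, m \leq 2m,
\]
where the last inequality uses $pd \geq 2$ (which holds under the standing assumption $d \geq 2$). Combined with $M(\mathbf{x}) \equiv r \pmod{m}$ and $0 \leq r < m$, this forces $M(\mathbf{x}) \in \{r,\, r+m\}$.

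Next, I would bound the index $I := M(\mathbf{x}) - M(\mathbf{x}')$. Writing $D$ for the deletion set of size $c$ and $\{i_1, \ldots, i_{n-c}\}$ for its complement in increasing order, I decompose
\[
I = \sum_{i \in D} w_i x_i + \sum_{k=1}^{n-c} (w_{i_k} - w_k)\, x_{i_k}.
\]
Since $i_k \geq k$, both sums are nonnegative, so $I \geq 0$. Bounding $x_i \leq p$ and using $\sum_{i=1}^n w_i = \sum_{i \in D} w_i + \sum_{k=1}^{n-c} w_{i_k}$ in the same manner as in the proof of Theorem \ref{th:inequality} yields
\[
I \leq p \sum_{i=n-c+1}^{n} w_i \leq p \sum_{j=1}^{d} w_{n+1-j} = w_{n+1} - 1 < m.
\]

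Finally, the case analysis closes the argument. If $M(\mathbf{x}') > r$, then $M(\mathbf{x}) \geq M(\mathbf{x}') > r$, so $M(\mathbf{x}) \neq r$ and therefore $M(\mathbf{x}) = r+m$. Conversely, if $M(\mathbf{x}') \leq r$, the alternative $M(\mathbf{x}) = r+m$ would give $I = M(\mathbf{x}) - M(\mathbf{x}') \geq (r+m) - r = m$, contradicting $I < m$; thus $M(\mathbf{x}) = r$. The main obstacle is nothing deep, just coordinating the two quantitative bounds ($M(\mathbf{x}) < 2m$ to pin down the candidate set, and $I < m$ to select the correct candidate), both of which are immediate consequences of the weight-sum estimates already developed in Lemma \ref{le:appendix-2} and in the proof of Theorem \ref{th:inequality}.
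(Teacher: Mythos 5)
Your proposal is correct and follows essentially the same route as the paper: pin down $M(\mathbf{x})\in\{r,r+m\}$ via the $M(\mathbf{x})<2m$ bound from Lemma \ref{le:appendix-2}, then use the index bound $0\leq M(\mathbf{x})-M(\mathbf{x}')<m$ (derived exactly as in the proof of Theorem \ref{th:inequality}) to select the correct value in each case. Your write-up of the index bound is in fact slightly more explicit than the paper's, but the argument is the same.
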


\begin{proof}
Recall from our definition of $\mathbf{x}$ that $M(\mathbf{x}) \equiv r \pmod{m}$.  We claim that 
\begin{equation}
\label{eq:bound}
M(x) < 2m.
\end{equation}
This follows from Lemma 
\ref{le:appendix-2}:
\[
M(\mathbf{x})\leq p \sum\limits_{i=1}^n w_i < \frac{ps}{ps-1}w_{n+1} \leq 2w_{n+1} \leq 2m,
\]
where we have used the fact that $ps/(ps-1)\leq 2$ since $d\geq 2$ and $p\geq 1$.  

If $M(\mathbf{x}')>r$, then $M(\mathbf{x})>r$ since $M(\mathbf{x})\geq M(\mathbf{x}')$.  It follows from (\ref{eq:bound}) that $M(\mathbf{x})=r+m$.
On the other hand, if $M(\mathbf{x}')\leq r$, then we claim that $M(\mathbf{x})=r$.  To prove this, assume on the contrary that $M(\mathbf{x})=r+m$.  Then
\begin{align*}
M(\mathbf{x})-M(\mathbf{x}') &
 \leq p(w_{n-c+1}+\ldots w_{n}) \\
 & < w_{n+1}-p(w_{n-d+1}+\ldots+w_{n-c})  \\
 & < m.
\end{align*}
It follows that
\[
M(\mathbf{x}')>M(\mathbf{x})-m = r,
\]
which is a contradiction.
\end{proof}

\subsection{Sizes of Generalized Helberg Codes}

We present values for the size of the largest code in terms of the codeword length.  Given positive integers $q$, $d$, $n$ $r$, and $m=w_{n+1}$, we denote the size of the largest code $C_n(q,d,w_{n+1},r)$ by 
\[
N_n(q,d)=\max_{ r=0,1,...,w_{n+1}-1}\{|C_n(q,d,w_{n+1},r)|\}.
\]
Also, let $R_n(q,d)$ denote the set of values $r$ for which $|C_n(q,d,w_{n+1},r)|=N_n(q,d)$.
 
Through exhaustive computer search, we computed the values of $N_n(q,d)$ and $R_n(q,d)$ for certain values of $q$, $d$, and $n$.  Table \ref{table:maximum-size-binary} gives values for $N_n(2,2)$ and $R_n(2,2)$ for binary 2-deletion codes ($q=2$, $d=2$) with $n$ ranging from 1 to 15.  Tables  \ref{table:maximum-size-ternary} and  \ref{table:maximum-size-quaternary} give values for ternary 2-deletion codes ($q=3$, $d=2$) and quaternary 2-deletion codes ($q=4$, $d=2$), respectively, but over a shorter range for $n$.

\begin{table}[t]
\caption{Binary 2-Deletion Codes: Values of $N_n(2,2)$ and $R_n(2,2)$}
\centering

\begin{tabular}{|c|c|l|}
\hline
%\multicolumn{3}{ |c| }{\textbf{Binary} ($q=2$)} \\ \hline
% \multicolumn{2}{ |c| }{$d=2$} & \multicolumn{2}{ |c| }{$d=3$} & \multicolumn{2}{ |c| }{$d=4$} & \multicolumn{2}{ |c| }{$d=5$} \\
% \hline
$n$ & $N_n(2,2)$ & $R_n(2,2)$  \\ [0.5ex] % inserts table %heading
\hline
1 & 1 & 0, 1 \\ \hline
2 & 1 & 0, 1, 2, 3  \\ \hline
3 & 2 & 0 \\ \hline
4 & 2 & 0, 1, 2, 7  \\ \hline
5 & 2 & 0, 1, 2, 3, 4, 5, 6, \\
& & 
7, 12, 13, 14, 19 \\ \hline
6 & 3 & 0, 1, 6, 7, 12, 13  \\ \hline
7 & 4 & 12, 13 \\ \hline
8 & 5 & 12, 33 \\ \hline
9 & 6 & 12, 33, 39, 45, 66  \\ \hline
10 & 8 & 66 \\ \hline
11 & 9 & 65, 66, 99, 100, 120, \\
& & 
121, 154, 155 \\ \hline
12 & 11 & 65, 66, 99, 154, 155, 175, \\
& & 176, 181, 182, 187, 188, \\
& & 208, 209, 264, 297, 298 \\ \hline
13 & 15 & 297, 298 \\ \hline
14 & 18  & 297, 441, 475, 496, 530, 674 \\ \hline
15 & 22 & 297, 441, 674, 763, 784, 790, \\
& & 796, 817, 906, 1139, 1283  \\ \hline
16 & 30 & 1283 \\ \hline
\end{tabular}

\label{table:maximum-size-binary}
\end{table}

\begin{table}[t]
\caption{Ternary 2-Deletion Codes:  Values of $N_n(3,2)$ and $R_n(3,2)$}
\centering

\begin{tabular}{|c|c|l|}
\hline
% \multicolumn{6}{ |c| }{\textbf{Ternary} ($q=3$)} \\ \hline
$n$ & $N_n(3,2)$ & $R_n(3,2)$  \\ [0.5ex] % inserts table %heading
\hline
1 & 1  & 0, 1, 2 \\ \hline
2 & 1  & 0, 1, 2, 3, 4, 5, 6, 7, 8 \\ \hline
3 & 2  & 0, 1 \\ \hline
4 & 2  & 0, 1, 2, 3, 4, 5, 6, 7, 25, \\
& & 26, 50, 51 \\ \hline
5 & 3  & 0, 25 \\ \hline
6 & 4 & 25, 50 \\ \hline
7 & 4 & 24, 25, 50, 69, 70, 71, 72, 73, 74, \\
& & 75, 94, 119, 138, 139, 140, 141,\\
& & 142,143, 144, 163, 188,189, 542, \\
& & 567, 1059, 1084
 \\ \hline
8 & 5 & 24, 25, 49, 50, 69, 70, 71, 72, 73, \\
 & & 74, 188, 189, 213, 214, 377, 378, \\
& & 402, 403, 517, 518, 519, 520, \\
& & 521, 522, 541, 542, 566, 567 \\ \hline
9 & 7  & 541, 542, 566, 567, 1058, 1059, \\
 & & 1083, 1084   \\ \hline
10 & 8  & 517, 518, 519, 520, 521, 541, 542, \\
& & 566, 567, 1437, 1482, 1483, 1484, \\
& & 1485, 1486, 1487, 1551, 1552, \\
& & 1553, 1554, 1555, 1556, 1601, \\
 & & 2850, 2895, 2896, 2897, 2898, \\
 & & 2899, 2900, 2964, 2965, 2966, \\
 & & 2967, 2968, 2969, 3014, 3884, \\
 & & 3885, 3909, 3910, 3930, 3931, \\
 & & 3932, 3933, 3934  \\ \hline
\end{tabular}

\label{table:maximum-size-ternary}
\end{table}

\begin{table}[t]
\caption{Quaternary 2-Deletion Codes: Values of $N_n(4,2)$ and $R_n(4,2)$}
\centering

\begin{tabular}{|c|c|l|}
\hline
% \multicolumn{6}{ |c| }{\textbf{Quaternary} ($q=4$)} \\ \hline
$n$ & $N_n(4,2)$ & $R_n(4,2)$  \\ [0.5ex] % inserts table %heading
\hline
1 & 1  & 0, 1, 2, 3 \\ \hline
2 & 1  & 0, 1, 2, 3, 4, 5, 6, 7, 8, 9, 10, 11, \\
 & & 12, 13, 14, 15 \\ \hline
3 & 2  & 0, 1, 2 \\ \hline
4 & 2  & 0, 1, 2, 3, 4, 5, 6, 7, 8, 9, 10, 11, \\
& & 12, 13, 14, 61, 62, 63, 122, 123, \\
& & 124, 183, 184, 185 \\ \hline
5 & 3  & 0, 1, 61, 62 \\ \hline
6 & 4  & 61, 62, 122, 123, 183, 184 \\ \hline
7 & 5  & 61, 880  \\ \hline
8 & 6  & 61, 122, 183, 880, 941, 1760, \\
& & 1821, 2640, 2701, 3398, \\
& & 3459, 3520   \\ \hline
\end{tabular}

\label{table:maximum-size-quaternary}
\end{table}

%\section{Conclusions}

\end{document}